\newtheorem{theorem}{Theorem}
\title{Upstream Polling Protocols for Flow Control in
PON/xDSL Hybrid Access Networks}
\author{Anu Mercian, Elliot I.~Gurrola, Frank Aurzada, Michael P. McGarry,
 and Martin Reisslein
\thanks{A.~Merican and M.~Reisslein are with the School of Electrical,
Computer,
and  Energy Eng., Arizona State Univ., Tempe, AZ 85287-5706, USA
(e-mail: \{amercian, reisslein\}@asu.edu).}
\thanks{E.~Gurrola and M.~McGarry are with the Dept. of Electr. and Comp.
Eng., University of Texas at El Paso, El Paso, TX 79968, USA
(email: mpmcgarry@utep.edu).}
\thanks{F.~Aurzada is with the Dept. of Mathematics,
Techn. Univ. Darmstadt, 64289 Darmstadt, Germany,
(e-mail: aurzada@mathematik.tu-darmstadt.de).}}
\begin{document}

\maketitle

\begin{abstract}
In a hybrid PON/xDSL access network, multiple Customer Premise Equipment (CPE)
nodes connect over individual Digital Subscriber Lines (DSLs) to a
drop-point device. The drop-point device, which is typically
reverse powered from the customer, is co-located with
an Optical Network Unit (ONU) of the Passive Optical Network (PON).
We demonstrate that the drop-point experiences
very high buffer occupancies when no flow control or standard
Ethernet PAUSE frame flow control is employed.
In order to reduce the buffer occupancies in the drop-point,
we introduce two gated flow control protocols that extend the
polling-based PON medium access control to the DSL segments between
the CPEs and the ONUs.
We analyze the timing of the gated flow control mechanisms to
specify the latest possible time instant when CPEs can start the DSL
upstream transmissions so that the ONU can forward the upstream
transmissions at the full PON upstream transmission bit rate.
Through extensive simulations for a wide range of bursty traffic models,
we find that the gated flow control mechanisms, specifically, the
ONU and CPE grant sizing policies, enable effective control of the maximum
drop-point buffer occupancies.
\end{abstract}

\begin{keywords}
Buffer occupancy, Digital Subscriber Line (DSL), Flow control,
Medium access control, Optical Network Unit (ONU),
Passive Optical Network (PON), Polling protocol.
\end{keywords}

\section{Introduction}
\label{sec:intro}
Access networks are communication networks that interconnect private local
area networks, such as the networks in the homes of individuals, with public
metropolitan and core networks, such as those constructed by service providers
to connect paying subscribers to the Internet.
Private local area networks often employ high speed wired and wireless
communications technologies, such as IEEE 802.3 Ethernet (up to 1 Gbit/sec)
and IEEE 802.11 WiFi (up to 600 Mbit/sec). These high-speed communications
technologies are cost effective in private local area networks due to
the short distances involved and subsequent low installation costs.
Public metropolitan and core networks employ a variety of communication
technologies that include dense wavelength division multiplexed technologies
over fiber optic transmission channels (up to 1 Tbit/sec). These high-speed
communication technologies are cost effective due to the
cost sharing over many paying subscribers. Access networks require
significantly higher installation costs compared to private local area
networks due to larger distances that must be covered. At the same time,
access networks have significantly smaller degrees of cost sharing compared
to public metropolitan and core networks; thereby increasing cost per paying
subscriber. As a result, access network technologies must keep installation
costs low~\cite{K0506}. Utilizing existing bandwidth-limited copper wire or
shared optical fiber will keep installation costs low~\cite{maz2014coe}.

In this paper we present our study of hybrid access networks that utilize
both copper wire and shared optical fiber. The shared optical fiber extends
from the service provider's central office to a drop-point whereby the final
few hundred meters to the subscriber premise are reached by existing
twisted-pair
copper wire. Figure \ref{fig:pondslnet} illustrates this hybrid access network
architecture that leverages the installation cost benefits of existing copper
wire and the latest advances in digital subscriber line (DSL)~\cite{dslbook}
technology that can realize up to 1 Gbps over short distances of twisted-pair
copper wire.
\begin{figure*}[t]
\centering
\includegraphics[scale=0.6]{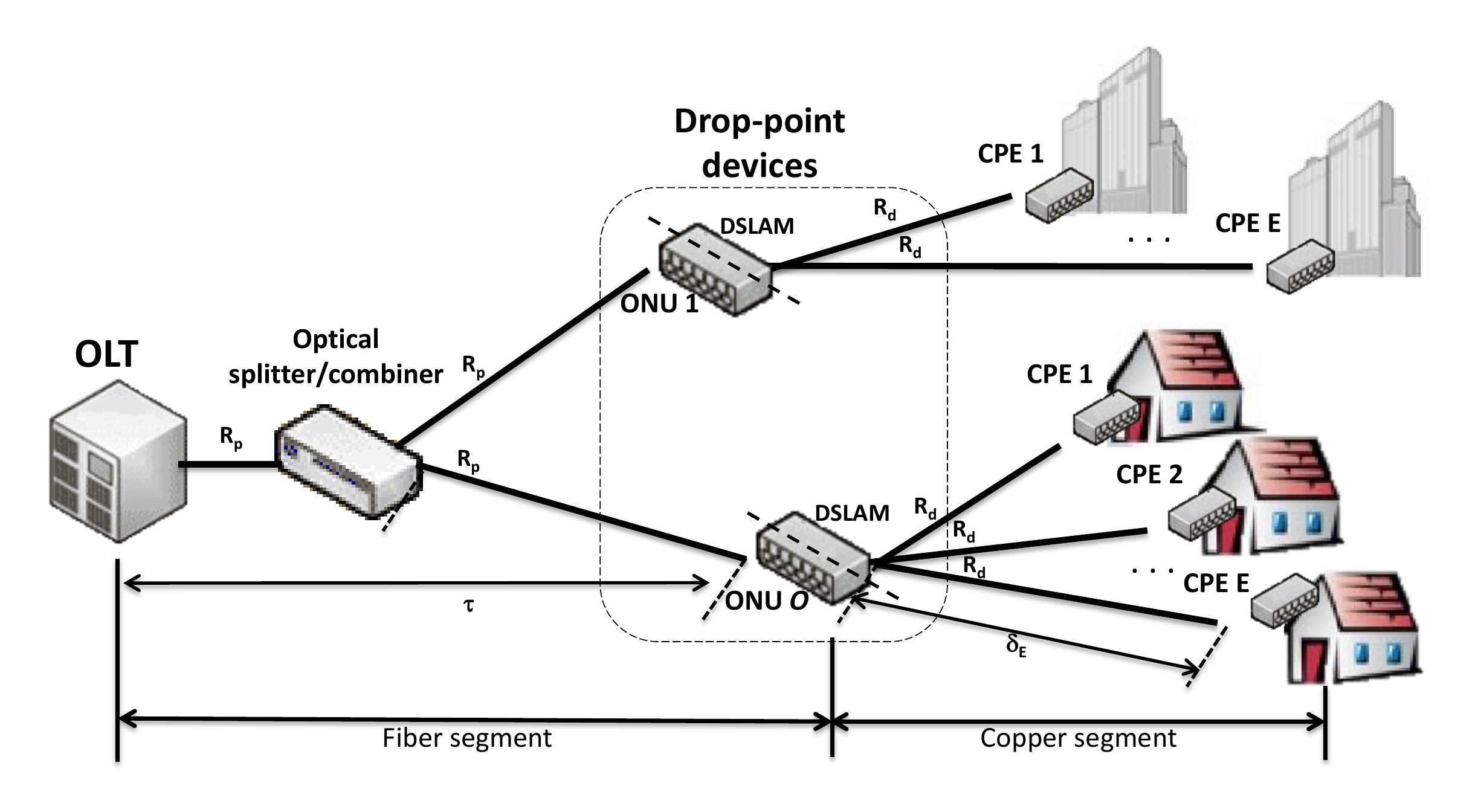}
\caption{A hybrid PON/xDSL access network architecture consists of a
  passive optical network (PON) connected to multiple copper digital
  subscriber lines (DSLs). The PON OLT
  connects to several drop-point devices. Each drop-point device is a
  combined PON Optical Network Unit (ONU) and DSL Access Multiplexer
  (DSLAM). Through the DSLAM, each drop-point device connects to
  multiple subscriber DSL customer premise equipment (CPE) nodes.}
\label{fig:pondslnet}
\end{figure*}

The optical fiber segment of this hybrid access network is organized as a
shared passive optical network (PON), whereby multiple optical network
units (ONUs)
share a single optical fiber connected to an optical line terminal (OLT) at the
service provider central office. The copper segments begin at each ONU whereby
the fiber is dropped and existing copper wires are utilized via DSL
transmission
technology to reach each subscriber premise. Each ONU is coupled with a DSL
access multiplexer (DSLAM) at the fiber drop-point. This so-called drop-point
device is active and therefore requires electric power to operate. However,
service providers want these devices to maintain the \textit{deploy-anywhere}
property of the optical splitter/combiner in a typical PON. To maintain this
property, each drop-point device is reverse powered using a subscriber's power
source. For this reason, it is of critical importance to reduce the energy
consumption of this device.

Reducing the memory capacity of the drop-point is an option for reducing
its energy consumption. A drop-point with a small memory capacity
translates into a design with a smaller memory device that contains
fewer transistors and capacitors that consume energy. However, reducing
the memory capacity of a drop-point can result in significant packet loss
if measures are not taken to back-pressure the buffering into either the
OLT in the downstream direction, or the DSL customer premise equipment (CPE) in
the upstream direction. The magnitude of buffering that can occur at the
drop-point is quite large due to the transmission bit rate mismatch between
the DSL line and the PON. Flow control mechanisms are, therefore, required to
avoid significant packet loss. In this paper we specifically examine several
upstream polling strategies for controlling the flow of upstream data from
each CPE to its associated drop-point device~\cite{L0313}. The objective of
these strategies is to minimize the maximum buffer occupancy required at
each drop-point with very low or no packet loss.

\subsection{Background}
Providing digital data communication through the access network
emerged with Digital Subscriber Loop or Line (DSL) technology in the
late 1970s and early 1980s~\cite{ABG1181}. At that time, researchers
identified mechanisms to aggregate digital data signals with analog
telephony signals and identified effective power levels and coding
mechanisms to tolerate the transmission impairments of the copper
loops used for analog telephony. These impairments included signal
reflections, cross-talk, and impulse noise~\cite{ABG1181}.  Recent
efforts exploit multiple-input-multiple-output (MIMO) or vectoring
techniques to cancel the crosstalk
impairment~\cite{GC0602,LCJM0907}. Systems using these techniques can
achieve approximately 1 Gbps transmission using four twisted pairs
across distances up to 300 meters~\cite{LCJM0907}. The recently
developed G.fast~\cite{Gfast,TGNM0813} DSL standard utilizes vectoring
techniques to achieve up to 1 Gbps speeds over these short distances.

Passive optical networks were envisioned in the late 1980s and early
1990s as an alternative to copper transmission between service
provider central offices and subscriber
premises~\cite{SBFHPO1187,FIMDDDKKDPB1194}. A PON utilizes a shared
fiber optic transmission medium shared by up to a few dozen
subscribers thereby reducing per-subscriber installation
costs. Further, PONs employ passive devices between the service
provider's central office and the subscriber premises to also reduce
recurring operational costs. Standardization of PON technologies began
around the early 2000s (e.g., Ethernet PONs~\cite{KP0202}) and have
subsequently achieved widespread deployment in the past few
years. Each of the various PON standards has considered the dynamic
bandwidth allocation (DBA) algorithms that decide how various
subscribers share the bandwidth of the optical fiber out of scope. As
a result, research activity on DBA algorithms started around the time
the standards were being
developed~\cite{dias2015off,gra2014max,KMP0202,sar2014ifa,SCAWM0309,tur2015new,ZM0709}.

Hybrid access network designs combine several transmission media types
(e.g., fiber, copper,
free space)~\cite{ahm2015ser,fan2015dem,KDRC0512,L0313} to reach
subscribers.
Hybrid fiber and copper access networks~\cite{L0313,gau2014unb}
provide a good balance between the increased bandwidth of fiber optic
transmission and the cost benefits of using already deployed copper
transmission lines. Wireless technologies in access networks
add both a very low-cost installation option by using free space transmission
as well as mobility features for users.

\subsection{Related Work}
Although there is significant literature on the integration of PONs
with wireless transmission media, e.g., WOBAN~\cite{SDM1107} and
FiWi~\cite{ahm2012rpr,GMA0209}, there is a dearth of literature on the
integration of PONs with copper transmission media.

Around the time the various PON standards were being developed,
researchers proposed developing hybrid PON/xDSL access networks. These
hybrid access networks would utilize DSL transmission technologies
with existing twisted-pair copper wire in conjunction with PONs. In
\cite{CB0300,CBABB0601} an early PON standard called ITU-T 983.1
Broadband PON (BPON) was coupled with VDSL to reach subscribers in a
cost-effective manner. Specifically,
an architecture for a combined ONU/VDSL line card (drop-point) device
that bridged a single VDSL line onto the PON was described in~\cite{CB0300}.
A full demonstration system for transferring MPEG-2 video through a
BPON/VDSL network using the ONU/VDSL line card~\cite{CB0300} was
presented in~\cite{CBABB0601}. In \cite{R0903}, a mathematical model
of the number of VDSL subscribers that can be serviced by a single
ONU as a function of a few VDSL parameters (e.g., symmetric
operation and bit rates) was presented. This model can help service
providers design their PON/xDSL networks to support the desired
number of subscribers. In a study on QoS-aware intra-ONU scheduling
for PONs~\cite{AYD1203}, hybrid PON/xDSL access networks were noted
as a promising candidate for cost-effective broadband access. This
early work on hybrid PON/xDSL access networks demonstrated its
feasibility and provided some analysis for capacity planning but
ignored detailed design elements of the drop-point device that
bridges the PON with the various DSL lines connecting to
subscribers.

Two physical-layer systems to bridge VDSL signals over a fiber
access network were proposed in~\cite{LTTBAPW0805}. Individual VDSL
signals are converted to be spectrally stacked into a composite
signal that modulates an optical carrier.  In the first system the
optical carrier is supplied by a laser at the ONU and in the second
system the optical carrier is supplied by a laser in the OLT that is
reflected and modulated by a Reflective Semiconductor Optical
Amplifier (RSOA) at the ONU. The optical carrier provides 1 GHz of
spectral width accommodating 40 VDSL lines without guard bands and
25 VDSL lines with guard bands. Although, this approach to a hybrid
PON/xDSL allows the drop-point device to avoid buffering as well as
contain simple logic by pulling the DSLAM functionality into the
OLT, the design requires the PON to carry the full bandwidth of each
VDSL line even when idle. Designs that operate at the link layer
rather than physical layer can avoid transmission of idle data on
the PON thereby increasing the number of subscribers that can be
supported by capitalizing on statistical multiplexing gains.

The coaxial copper cable deployed by cable companies represents another existing
copper technology that can be used in conjunction with PONs to create
a hybrid access network.
Such a hybrid access network combining an Ethernet PON with an Ethernet over
Coax (EoC) network was proposed in~\cite{WLZZC0909}.
The proposed network uses EPON protocols on the EoC
segment in isolation from the EPON segment without any coordination between
the segments. A similar network was examined in~\cite{W1009} in terms of
the blocking probability and delay for a video-on-demand
service. None of these studies discussed the design of the drop-point device
or explored DBA algorithms for these types of networks.

In November 2011, the IEEE 802.3 working group initiated the creation of
a study to extend the EPON protocol over hybrid fiber-coax cable television
networks; the developing standard is referred to as EPON Protocol over Coax
(EPoC)~\cite{BTMZCEF1013}. Developing bandwidth allocation schemes for
EPoC has received  little research attention to date.
In particular, a DBA algorithm that increases
channel utilization in spite of increased propagation delays due to
the coaxial copper network was designed in~\cite{BTZCEFM0413}.
Mechanisms to map Ethernet frame transmissions to/from
the time division multiplexed channel of the PON to the time and frequency
division multiplexed coaxial network have been studied
in~\cite{BTMZCEF1013,BTZCEFM1014}.

\subsection{Our Contribution}
In this paper we contribute the first hybrid PON/xDSL drop-point design
providing lowered
energy consumption by means of reduced buffering requirements. We
mitigate the packet loss effects of the small drop-point buffers by defining
and evaluating several polling strategies that contain flow control.
Although we focus on xDSL as the copper technology in the hybrid
access networks, our proposed flow control polling protocols can be
analogously employed with other copper technologies, such as coax cable.

We define polling mechanisms that place the DSL CPEs under the control
of the PON OLT. With this flow control mechanism the polling MAC protocols that
have been designed for PONs are extended to a second stage of polling in the DSL
segments. We call this mechanism GATED flow control as the OLT on the
PON not only grants transmission access to ONUs on the PON but
determines when DSL CPEs transmit upstream to their attached ONUs. As
far as we know, we are the first to explore joint upstream transmission
coordination for hybrid PON/xDSL access networks.

The work presented in this paper provides significant extensions to the work
we presented at two conferences~\cite{MGL1213,MGMR1115}. In~\cite{MGL1213},
we presented a preliminary form of one of the two Gated flow control
mechanisms along with some initial simulation results. In~\cite{MGMR1115},
we present simulation results for one DBA algorithm,
namely (Online, Limited)~\cite{KMP0202,ZM0709}.
In contrast, in this article we comprehensively specify two
Gated flow control protocols through detailed analysis of the CPE
transmission timing (scheduling) and present extensive simulation results
that include the (Online, Gated) and (Online, Excess) DBA algorithms.

\section{PON/xDSL Network}  \label{sec:polling}
In this section, we briefly describe the PON/xDSL network architecture
and outline flow control based on conventional PON polling in
conjunction with the standard Ethernet PAUSE frame.

\subsection{Network Architecture}
As illustrated in Figure \ref{fig:pondslnet}, a PON/xDSL hybrid access
network connects multiple CPE devices $c,\ c = 1, 2, \ldots, E$, each
via its own DSL, to a drop-point device. Let $R_d$ [bit/s] denote the
upstream transmission bit rate on each DSL line and $\delta_c$ denote the
one-way propagation delay [s] between CPE $c$ and its drop-point;
the main model notations are summarized in Table~\ref{tab:params}.

Each drop-point consists of a DSLAM combined with a ONU of the
PON. Let $O$ denote the total number of ONUs in the PON; whereby each ONU
is part of a drop-point, $R_p$ be the upstream transmission bit rate [bit/s]
from an ONU to the PON OLT, and $\tau$ be the one-way transmission
delay [s] between an ONU and the OLT. We note that typically $R_p > R_d$.
\begin{table}[t]
\caption{Model Parameters for PON/xDSL Hybrid Access Network}
\label{tab:params}
\centering
\begin{tabular}{|l|l|}
\hline
Param. & Meaning \\ [0.5ex]
\hline
\multicolumn{2}{|c|}{Network structure} \\
$R_{d}$ &   xDSL upstream transmission bit rate [bit/s] \\
$R_{p}$ &   PON upstream transmission bit rate [bit/s]\\
$E$ &  Number of CPEs per ONU; CPE index $c,\ c = 1, 2, \ldots, E$\\
$\delta_c$  &   One-way propagation delay from CPE $c$ to drop-point [s] \\
$\tau$   &   One-way propagation delay between OLT and ONU [s]\\ \hline
\multicolumn{2}{|c|}{Polling protocol} \\
$g_p$  &    Transmission time [s] for grant message on downstream PON\\
$g_d$   &   Transmission time [s] for grant message on downstream DSL \\
$G_c$   &   Size of upstream transmission window [bits] granted to
            CPE $c$ \\
$M$ &   Maximum packet size [bits] \\  \hline
\multicolumn{2}{|c|}{Polling analysis for individual CPE $c$} \\
$\sigma_c$  & Start time instant of CPE $c$ upstream DSL transmission \\
 & \ \ \    (relative to start time of a cycle) \\
$\alpha_c$ & Time instant when CPE $c$ data
            starts to arrive at drop-point \\
$\omega_c$ & Time instant when CPE $c$ data is compl.\ received
    at drop-point \\
$\mu_c$  & Time instant when ONU starts to transmit (serve) CPE $c$
       data \\
   & \ \ \ (= time instant of max. CPE $c$ drop-point buffer occupancy) \\
$\beta_c$ &  Time instant when ONU upstream transm. of CPE $c$ data
         ends \\
$T$ & Cycle duration from start instant of OLT grant transmission \\
   & \ \ \ to receipt of CPE data by OLT \\ \hline
\multicolumn{2}{|c|}{Segregated CPE transmissions on PON} \\
$\mu_{(E)}$ & Start time of ONU transm. of
   back-to-back \\
    & \ \ CPE $1, 2, \ldots, E$ data \\
$\sigma_c^s$ & Start time of CPE $c$ upstream DSL transmission \\ \hline
\multicolumn{2}{|c|}{Multiplexed CPE transmissions on PON} \\
$\mu^m$  & Start time of ONU transm. of multiplexed CPE data \\
$\sigma_c^m$ & Start time of CPE $c$ upstream DSL transmission \\ \hline
\end{tabular}
\end{table}

To support the ``deploy-anywhere'' property, each drop-point device is
remotely powered over the DSL using the power supply of several
subscribers. As a result of the remote powering, the drop-point design
must consume as little energy as possible. We explore reducing
buffering at the drop-point to reduce energy consumption. By reducing
the \textit{maximum buffer occupancy}, the drop-point can be designed
with a reduced memory capacity that will translate into fewer energy
consuming transistors and/or capacitors.
We utilize flow control
strategies through MAC polling to control buffering at each
drop-point. We introduce three upstream polling strategies that
provide flow control:
\begin{enumerate}
  \item ONU polling with PAUSE frame flow control
  \item Gated ONU:CPE polling flow control
    with segregated CPE transmission on PON (\textit{\textbf{ONU:CPE:seg}})
  \item Gated ONU:CPE polling flow control
    with multiplexed CPE transmission on PON (\textit{\textbf{ONU:CPE:mux}})
\end{enumerate}

\subsection{ONU Polling with PAUSE-Frame Flow Control}
\label{sec:onupoll}
Our first proposed upstream polling strategy utilizes OLT media access
control (MAC) through polling only on the PON segment. With this strategy,
each CPE continuously
transmits upstream on its attached DSL. To control the flow of
upstream traffic so as to reduce the maximum buffer occupancy,
we utilize the standard Ethernet PAUSE frame flow control:
When an Ethernet receiver's buffer reaches a certain threshold that
Ethernet node transmits a PAUSE frame to the attached node in a
point-to-point configuration. Upon receipt of the PAUSE frame, an
Ethernet transmitter squelches its transmission for the time period
indicated in the PAUSE frame.
In the PON/xDSL network, the drop-point monitors its upstream
DSL buffer and once its occupancy reaches a certain threshold,
the drop-point transmits
a PAUSE frame downstream to the DSL CPE. When the DSL CPE
receives the PAUSE frame it squelches its transmission for the time
period indicated in the PAUSE frame.

\section{Gated ONU:CPE Polling Flow Control}  \label{onucpe_poll:sec}
\subsection{Overview of ONU:CPE Polling Protocol}
Our proposed upstream ONU:CPE polling strategies extend the
OLT MAC polling~\cite{hos2015mul,kan2012ban,KMP0202,mer2011ban,ZM0709}
to each DSL CPE.
A DSL CPE transmits upstream only when explicitly
polled by the PON OLT with a GATE message.
The PON OLT conducts two stages of polling,
the first stage polls each ONU and the second stage polls each CPE.
More specifically, in a given cycle, the OLT sends a gate
message to the ONU to grant the ONU an upstream transmission window for
the data and bandwidth requests (reports) from the attached CPEs as well as
$E$ gate messages for the ONU to forward to the attached $E$ CPEs.
We denote $g_p$ for the downstream transmission time of a gate message on
the PON and $g_d$ for the downstream transmission time of a gate message
on a DSL. Moreover, we denote $G_c$ for the size [bit] of the upstream
transmission window granted to CPE $c$.
By controlling the transmission of each DSL CPE, the PON OLT can exercise
tight control over the magnitude of buffering that occurs at the drop-point.

\subsection{CPE Grant Sizing} \label{cpesize:sec}
In ONU:CPE polling, the OLT can apply any of the existing ONU grant
sizing strategies~\cite{KMP0202,ZM0709,MR0712} to assign each ONU an upstream
transmission window duration (grant size) according to the reported
bandwidth requests.
In turn, the OLT allocates a given ONU grant size to grants to the
attached CPEs and other (non-xDSL traffic) at the ONU.
When making a grant sizing decision for an ONU, the OLT knows the
bandwidth requests from all CPEs attached to the ONU.
Thus, the OLT can employ any of the grant sizing approaches requiring
knowledge of all bandwidth requests,
i.e., so-called offline approaches~\cite{ZM0709,MR0712},
for sizing the CPE grants.

As specified by the VDSL standard~\cite{VDSL}, Ethernet frames are
encapsulated in a continuous stream of Packet Transfer Mode (PTM) 65~Byte
codewords, see~\cite[Annex N]{ADSL}. Each codeword
contains one synchronization byte for every 64 bytes of data as well
as control characters and idle data bytes. The VDSL CPE under study
has been designed to suppress PTM codewords that contain all idle
data bytes. However, Ethernet frames can be encapsulated in PTM codewords
that contain idle data bytes. The number of bytes to be transmitted to
release a certain number of intended Ethernet frames from the CPE
depends on how the individual Ethernet frames expand within the PTM
codewords due to the inclusion of both PTM control characters and idle
data bytes. Modeling the exact number of bytes consumed by PTM codewords for
a given number of Ethernet frames requires knowledge of the individual
Ethernet frame sizes.  That information is not available at the
OLT. Therefore, we estimate the CPE grant size to accommodate the PON
grant size with one synchronization byte for every 64 data bytes. We
then assume one extra codeword to contain control characters and idle
data bytes.

Due to the CPE grant size estimation, it is possible
that the CPE grant is too small and therefore does not allow all of
the PTM codewords containing the intended Ethernet frames to be
transmitted. In this case, an intended Ethernet frame will only be
partially received at the ONU with the other part left at the CPE. With
the next grant, the remainder of this Ethernet frame will be
transmitted, along with the other Ethernet frames intended for that
grant. The resulting extra Ethernet frame at the ONU will not
be accommodated by the current PON upstream grant. That Ethernet frame becomes
residue that stays at the drop point until it can be serviced in the next PON
upstream grant.
We also note that if we increased our CPE grant size estimate,
then the grant would be
too large and result in one or more Ethernet frames left as residue at
the ONU because the PON upstream grant would not accommodate them.

In the subsequent analysis of ONU CPE polling in this
Section~\ref{onucpe_poll:sec}, we neglect the drop point buffer residue. The
simulations in Section~\ref{sec:experiment} consider the full xDSL
and PON framing details and thus include the effects of the residue.
We note that due to neglecting the residue,
the PON delay analysis in Section~\ref{pondelay:sec} is approximate.
However, we emphasize that the timing (scheduling) analyses in
Sections~\ref{cpesepspec:sec} and~\ref{onucpemux:sec} are accurate
for the grant sizes determined by the OLT.

\subsection{Basic Polling Timing Analysis for an Individual CPE}
\label{basicindcpe:sec}
In this subsection we examine the timing of the polling of a single
CPE $c$ attached to an ONU. We establish basic timing relationships
of the CPE and ONU upstream data transmissions.
Due to the transmission delays of the ONU and CPE grant messages
and the downstream propagation delays, the CPE can start transmitting
at the earliest at time instant
\begin{eqnarray} \label{sigmac:eqn}
\sigma_c = 2 g_p + \tau + g_d + \delta_c.
\end{eqnarray}
Note that we measure time instants relative to the beginning of the
cycle, i.e., we consider the time instant when the OLT begins to transmit
the gate message downstream as zero.
For the basic analysis we assume that the CPE begins to transmit its
data at this earliest possible time instant $\sigma_c$ to the
drop-point.

\begin{figure}[t]
\centering
\includegraphics[scale=0.35]{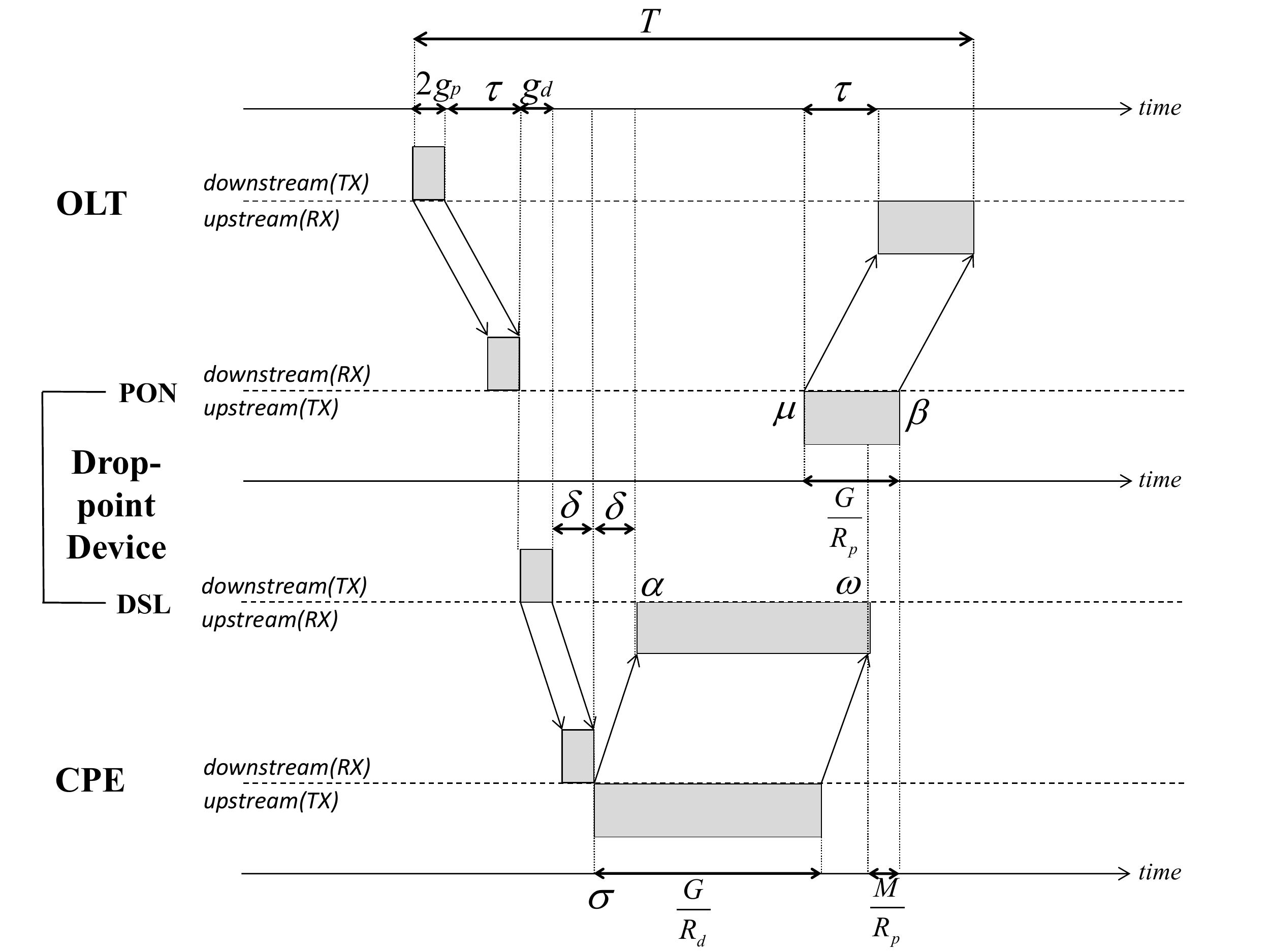}
\caption{Illustration of polling timing for an individual CPE $c$.}
\label{fig:onecpepolltime}
\end{figure}
As illustrated in Figure~\ref{fig:onecpepolltime}, a CPE upstream
transmission grant of size $G_c$ needs to be transmitted through
both the DSL segment (CPE $\to$ drop-point) and the PON segment
(drop-point $\to$ OLT). To determine when the transmission on the
PON should begin, we must consider that the last bit of a packet
must have arrived at the drop-point device from a CPE before the
first bit of that same packet can be transmitted by the ONU to the
OLT. We let $M$ denote the maximum packet size [in bit] and
conservatively consider maximum size packets in the following
analysis. Focusing on the last packet of the CPE upstream
transmission, we note that the end of the last packet, i.e., the end
of the CPE upstream transmission must be received by the drop-point
before the ONU can forward this last packet over the PON to the OLT.
We denote $\alpha_c$ for the time instant when the CPE upstream
transmission begins to arrive (and occupy buffer space) at the drop
point, i.e.,
\begin{eqnarray}
 \alpha_c = \sigma_c + \delta_c.
\end{eqnarray}

After complete receipt of the last packet at time instant
\begin{eqnarray}  \label{omega:eqn}
\omega_c  = \alpha_c + \frac{G_c}{R_d},
\end{eqnarray}
the ONU can immediately transmit this last packet
to the OLT. We denote $\beta_c$ for the time instant when
the last packet is completely transmitted by the ONU, i.e.,
when the CPE transmission stops to occupy buffer in the drop-point.
Clearly,
\begin{eqnarray}
\beta_c = \omega_c + \frac{M}{R_p}.
\end{eqnarray}
The end of the last packet reaches the OLT after the PON propagation delay,
resulting in the cycle duration $T = \beta_c + \tau$.

For the last packet to be able to start ONU transmission at time instant
$\beta_c - M/R_p$, all preceding packets must have already transmitted
by the ONU by time instant $\beta_c - M/R_p$.
More generally, the ONU finishes the transmission of the
$G_c$ bits of CPE data by instant $\beta_c$,
if the ONU starts the PON upstream transmission
(service) of the CPE data at time instant
\begin{eqnarray}  \label{muc:eqn}
\mu_c = \beta_c - \frac{G_c}{R_p}.
\end{eqnarray}

We note that throughout this study we consider polling strategies
that transmit CPE data at the full optical transmission bit rate
$R_p$ on the PON upstream channel from ONU to OLT. Since the xDSL
transmission bit rate $R_d$ is typically lower than the fiber
transmission bit rate $R_p$, the drop point needs to buffer a part
of a CPE data transmission, which is received at rate $R_d < R_p$ at
the drop point, before onward transmission at rate $R_p$ over the
PON. Polling strategies that transmit on the PON upstream channel at
a rate lower than $R_p$ can reduce drop point buffering at the
expense of increased delay. The study of such strategies that only
partially utilize the optical upstream transmission bit rate is left
for future research.

\begin{figure}[t]
\centering
\includegraphics[scale=0.5]{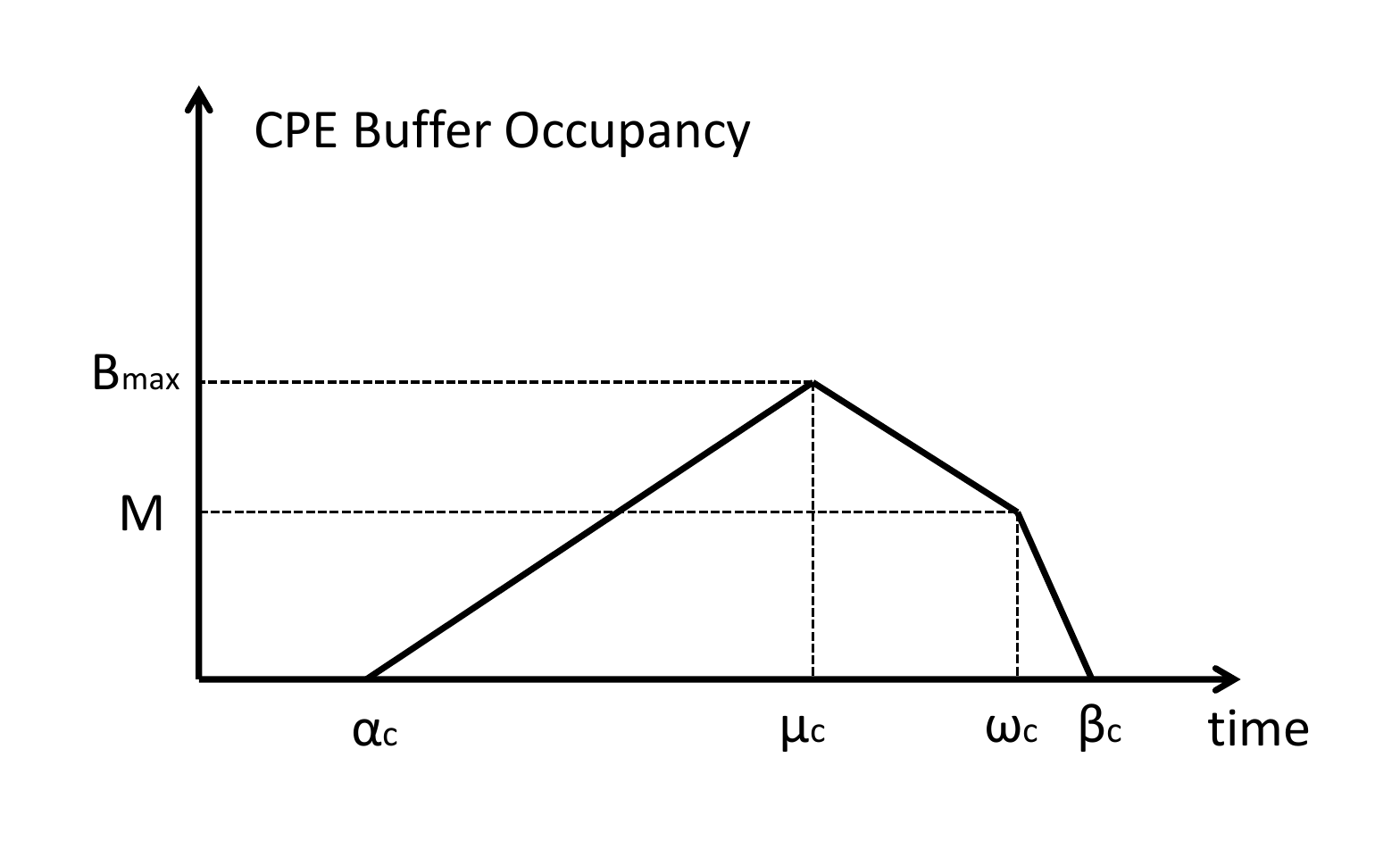}
\caption{Illustration of buffer occupancy for a given CPE $c$ in drop point:
The CPE buffer is filled at rate $R_d$ until the ONU starts transmitting
the CPE data at time instant $\mu_c$ with rate $R_p > R_d$. Then,
the buffer occupancy decreases at rate $R_p - R_d$ until the CPE data stops
arriving to the drop point at instant $\omega_c$; from then on the
CPE buffer is drained at rate $R_p$.}
\label{fig:hill}
\end{figure}
\subsection{Drop-point Buffer Occupancy of a Single CPE}
\label{bufsingleCPE:sec}
Based on the basic timing analysis in the preceding section,
we characterize the buffer occupancy due to a single CPE $c$
in the drop-point.
The buffer occupancy grows at rate $R_d$ [bit/s]
from arrival instant $\alpha_c$ of the CPE~$c$ upstream transmission to
the drop point until the starting instant $\mu_c$ of the ONU upstream
transmission.
From instant $\mu_c$ on the drop-point buffer drains at
rate $R_p - R_d$ up to instant $\omega_c$,
when the CPE transmission has been completely received at the drop-point.
From instant $\omega_c$ through the end of the ONU upstream transmission
at $\beta_c$, the buffer drains at rate $R_p$.
Since $R_p > R_d$, the maximum buffer occupancy $B_{\max, c}$ occurs
at time instant $\mu_c$ when the ONU starts to serve (transmit)
the CPE traffic.
The drop-point has been receiving CPE data at rate $R_d$ since time instant
$\alpha_c$, resulting in
\begin{eqnarray}  \label{Bmax:eqn}
B_{\max, c} = (\mu_c - \alpha_c) R_d  = G_c - \frac{R_d}{R_p} (G_c - M).
\end{eqnarray}
Thus, the buffer occupancy $B_c(t)$
of the drop point buffer associated with CPE~$c$ is
\begin{eqnarray}
B_c(t)=\begin{cases}  
                      R_d (t-\alpha_c) & t\in[\alpha_c,\mu_c]\\
     B_{\max, c} - (R_p - R_d) \, (t - \mu_c)  & t \in [ \mu_c, \omega_c]\\
     M - R_p \, (t - \omega_c)  & t\in[\omega_c, \beta_c],
    \end{cases}
\end{eqnarray}
and zero otherwise.
For joint CPE buffering in the drop point (ONU),
the superposition of the buffer occupancies
\begin{eqnarray}  \label{Bt:eqn}
B(t) := \sum_c B_c(t)
\end{eqnarray}
characterizes the occupancy level of the shared ONU buffer.
The maximum of $B(t)$ is the maximum ONU buffer occupancy.

\subsubsection{PON Segment Packet Delay}  \label{pondelay:sec}
Considering maximum sized packets,
the first packet of a given CPE upstream transmission is completely received
by the drop point (ONU) at time instant $\alpha_c + M/R_d$.
This first packet has to wait (queue) at the drop point until
its transmission over the PON upstream wavelength channel commences
at time instant $\mu_c$.
Thus, the queueing delay is $\mu_c - \alpha_c -M/R_d$,
which can be expressed in terms of the maximum CPE buffer occupancy
$B_{\max,c}$ (\ref{Bmax:eqn}) as $(B_{\max, c} - M)/ R_d$.
The last packet of the CPE upstream transmission, which is completely
received by the ONU at time instant $\omega_c$ (\ref{omega:eqn}),
does not experience any queueing delay.
Each packet experiences that transmission delay $M/R_p$
and propagation delay $\tau$ of the PON.
Summing these delay components gives the total PON delay for a packet;
and averaging over the packets in the CPE upstream transmission
leads to the average packet delay on the PON segment.

\subsection{ONU:CPE Polling with Segregated CPE Transmissions on PON}
\label{cpesepspec:sec}
In this section, we specify the Gated ONU:CPE polling protocol with
segregated CPE transmissions on the PON upstream channels.
That is, the data of each DSL CPE is transmitted
upstream in its own sub-window of the overall ONU upstream
transmission window.
We consider $E$ CPEs attached to a given ONU.
The ONU sends the $E$ CPE data transmissions successively
according to a prescribed transmission order, as
specified in Section~\ref{cpeorder:sec}, over the
PON upstream wavelength channel.
The CPEs time (schedule) their transmissions
as specified in Section~\ref{cpesegtime:sec} to ensure that
the CPE data arriving at rate $R_d$ to the ONU can be transmitted
without interruptions at the full PON rate $R_p,\ R_p > R_d,$ to the OLT.

\subsubsection{CPE Polling Order} \label{cpeorder:sec}
The detailed analysis of the polling time with two CPEs in Appendix~1
indicates that the transmission order of CPE~1 followed by CPE~2 results in
shorter cycle duration if
\begin{eqnarray}
G_1 < G_2 + 2 \frac{\delta_2 - \delta_1}{\frac{1}{R_d} - \frac{1}{R_p}}.
\end{eqnarray}
That is, transmitting the traffic from the CPE
with the smaller grant size $G_1$ on the upstream PON channel before the
CPE with the larger grant $G_2$ generally
reduces the cycle duration, provided the
round-trip propagation delays $\delta_1$ and $\delta_2$
between the ONU and the two CPEs are not too different.
Typically, the CPEs are all in close vicinity of the ONU, thus
the round-trip propagation delay differences are often
negligible, even when scaled by the $1/(\frac{1}{R_d} - \frac{1}{R_p})$ factor.
For the remainder of this study we consider therefore
the CPE transmission order $c = 1,\ c = 2, \ldots, c = E$
with $G_1 \leq G_2 \cdots \leq G_E$ on the PON upstream transmission channel.

\subsubsection{CPE Transmission Timing} \label{cpesegtime:sec}
We derive the earliest time instant $\mu_{1, 2, \ldots, E}$
that the ONU can start
upstream transmission such that all $E$ CPE data sets arrive in
time to the drop-point for the ONU to continuously transmit at rate
$R_p$.
Specifically, we prove the following theorem:
\begin{theorem}
In order to meet the constraint of continuous (back-to-back) transmission
of the data from CPEs $1, 2, \ldots, E$ in separate sub-transmission windows
at the PON rate $R_p$, the ONU can start transmission
at the earliest at time instant
\begin{eqnarray}
 \mu_{(E)} \!\!\! &=& \!\!\! \mu_{(E-1)}  \nonumber \\
  && \!\!\!\!\!\!\!\!\!\!\! \!
  + \max \left( 0 ,\ \mu_E - \mu_{(E-1)} - \frac{\sum_{c = 1}^{E-1} G_c}{R_p}
    \right),  \label{mu12e:eqn}
\end{eqnarray}
whereby the ONU transmission starting instant $\mu_c,\ c = 1, 2, \ldots, E$,
for an individual CPE $c$ is given by Eqn.~(\ref{muc:eqn}).
\end{theorem}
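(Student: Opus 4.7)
The plan is to prove the theorem by induction on the number $E$ of CPEs served back-to-back on the PON upstream channel. The base case $E=1$ is immediate: with only CPE~1, the back-to-back constraint is vacuous and the earliest ONU start time reduces to $\mu_{(1)}=\mu_1$, which is exactly the single-CPE result from Section~\ref{basicindcpe:sec} (Eqn.~(\ref{muc:eqn})).

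For the inductive step, I would first observe that in the segregated schedule, when the ONU starts its overall transmission at time $\mu_{(E)}$, the PON sub-window for CPE~$c$ begins at $\mu_{(E)}+\sum_{k=1}^{c-1}G_k/R_p$. Two separate feasibility constraints then arise. First, since CPE~$E$'s data is transmitted last, adding it cannot relax the sub-schedule for CPEs $1,\ldots,E-1$; by the induction hypothesis this yields the necessary condition $\mu_{(E)}\ge\mu_{(E-1)}$. Second, the single-CPE analysis of Section~\ref{basicindcpe:sec} applied to CPE~$E$ shows that CPE~$E$'s final packet cannot be forwarded by the ONU before time $\mu_E+G_E/R_p$; since CPE~$E$'s sub-window ends at $\mu_{(E)}+\sum_{k=1}^{E}G_k/R_p$, this gives the second necessary condition $\mu_{(E)}\ge\mu_E-\sum_{k=1}^{E-1}G_k/R_p$. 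Taking the maximum of these two lower bounds and rewriting via the identity $\max(a,b)=a+\max(0,b-a)$ produces exactly Eqn.~(\ref{mu12e:eqn}).

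To complete the proof I still need sufficiency, i.e., that the value $\mu_{(E)}$ from (\ref{mu12e:eqn}) is actually achievable. Here I would have each CPE $c\in\{1,\ldots,E\}$ fire at its earliest possible instant $\sigma_c$ given in (\ref{sigmac:eqn}); then $\omega_c$ has its minimum value, and by construction $\mu_{(E)}+\sum_{k=1}^{c-1}G_k/R_p\ge\mu_c$ for every $c$, so the sub-window for each CPE starts no earlier than its own single-CPE feasibility instant. Combined with $R_p>R_d$ and the fact that consecutive sub-windows on the PON abut exactly at the boundaries $\mu_{(E)}+\sum_{k=1}^{c}G_k/R_p$, this ensures uninterrupted transmission at rate $R_p$.

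The main technical subtlety, and therefore the step I would be most careful about, is justifying that CPEs firing at their earliest $\sigma_c$ really is optimal for minimizing $\mu_{(E)}$: a later $\sigma_c$ only pushes $\omega_c$ later and thus raises the lower bound on $\mu_{(E)}$. Conversely, no earlier $\sigma_c$ is admissible because of (\ref{sigmac:eqn}). A secondary point worth making explicit is that the recursion's two terms are not redundant: for small $G_E$ the second term vanishes and the added CPE does not push $\mu_{(E)}$ beyond $\mu_{(E-1)}$, whereas for large $G_E$ (or large $\delta_E$) it dominates. The chosen polling order $G_1\le\cdots\le G_E$ established in Section~\ref{cpeorder:sec} ensures the cumulative term $\sum_{k=1}^{E-1}G_k/R_p$ is as small as possible relative to $\mu_E$, keeping the recursion tight.
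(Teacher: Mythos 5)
Your proposal is correct and follows essentially the same route as the paper's proof: an induction on $E$ that treats CPEs $1,\ldots,E-1$ as a single block with earliest start $\mu_{(E-1)}$ and combines this with the constraint imposed by CPE~$E$ via a maximum of two lower bounds, which is exactly the recursion in Eqn.~(\ref{mu12e:eqn}). Your explicit separation of necessity and sufficiency (and the observation that firing each CPE at its earliest $\sigma_c$ minimizes $\mu_{(E)}$) merely makes rigorous what the paper argues informally through its two-case discussion for $E=2$.
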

\begin{proof}
We consider initially two CPEs $c = 1$ and $c = 2$.
Considering each of these two CPEs individually,
Eqn.~(\ref{muc:eqn}) gives the respective time instants $\mu_1$ and
$\mu_2$ when ONU service could at the earliest commence, when considering
a given CPE in isolation.

 There are two cases:
 If $\mu_1 + G_1/R_p > \mu_2$, then the earliest instant for the
 continuous ONU transmission to commence is $\mu_1$. This is because the
 transmission of the data from CPE $c = 1$ takes longer
than CPE $c=2 $ needs to get its data ``ready'' for ONU transmission.

 If, on the other hand, $\mu_1 +G_1/R_p < \mu_2$,
then the ONU transmission of CPE $c = 1$ data must be delayed
in order to avoid a gap between the end of the ONU transmission of the
CPE $c = 1$ data and the start of the ONU transmission of the
CPE $c = 2$ data.
The earliest instant for the continuous ONU transmission to commence is
$\mu_2 - G_1/R_p$, which gives the ONU just enough time to transmit
the CPE $c = 1$ data before the CPE $c = 2$ data is ``ready'' for ONU
transmission.
In summary, the two cases for $E = 2$ CPEs result in the earliest
start time
\begin{eqnarray} \label{induct0:eqn}
 \mu_{(2)} = \max \left( \mu_1,\ \mu_2 - \frac{G_1}{R_p} \right)
\end{eqnarray}
for continuous ONU transmission at rate $R_p$.

We proceed to the general case of $E, \ E > 2$ CPEs by induction:
Consider the continuous (back-to-back) ONU transmission of
CPE $c = 1$ and CPE $c = 2$ data as one CPE transmission with
earliest ONU transmission instant (when considered individually)
$\mu_{(2)}$.
Next, we consider this back-to-back CPE $c = 1$ and $c=2$ data
as well as the CPE $c = 3$ data. Analogous to (\ref{induct0:eqn}),
we obtain the earliest starting instant of the continuous
ONU transmission of the data from CPEs $c = 1, 2$, and 3:
\begin{eqnarray}
 \mu_{(3)} =  \max \left( \mu_{(2)} ,\ \mu_3 - \frac{G_1+G_2}{R_p} \right).
\end{eqnarray}
Proceeding to the induction step with the continuous ONU transmission
of the CPE $c = 1, 2, \ldots, E-1$ data with earliest transmission instant
$\mu_{(E-1)}$ as well as the CPE $c = E$ data results in the
earliest transmission instant given by Eqn.~(\ref{mu12e:eqn})
\end{proof}

The sub-transmission window of CPE $c = 1$ starts at $\mu_{(E)}$,
while CPE $c = 2$ starts when the ONU transmission of CPE $c = 1$ data is
complete. Generally, the starting instants of the segregated CPE
sub-transmission windows $c = 1, 2, \ldots, E$ are
\begin{eqnarray}
  \mu_c^s = \mu_{(E)} + \sum_{i = 1}^{c-1} \frac{G_i}{R_p}.
\end{eqnarray}
From these starting instants $\mu_c^s$ of the segregated CPE sub-transmission
windows, we find the corresponding starting instants $\sigma_c^s$
of the CPE transmissions by re-tracing the analysis in
Section~\ref{basicindcpe:sec}.
Briefly, for the continuous ONU transmission of the CPE $c$ data at rate
$R_p$ is it sufficient for CPE $c$ to commence transmission
$G_c/R_d + M/R_p + \delta_c$ before the end of the ONU transmission at instant
$\mu_c^s + G_c / R_p$, i.e.,
\begin{eqnarray} \label{sigcs:eqn}
  \sigma_c^s = \mu_c^s + \frac{G_c-M}{R_p} -  \frac{G_c}{R_d} - \delta_c.
\end{eqnarray}
Starting the CPE transmissions at $\sigma_c^s$ instead of
the earliest possible $\sigma_c$ (\ref{sigmac:eqn})
for an individual transmission reduces the drop-point buffer occupancy.

\subsection{ONU:CPE Polling with Multiplexed CPE Transmissions on PON}
\label{onucpemux:sec}
In this section, we specify the ONU:CPE polling protocol with statistical
multiplexing of the packets from the individual CPEs
in the ONU upstream transmission window.
All DSL CPEs attached to the same drop-point statistically multiplex their
transmissions into a joint ONU upstream transmission window
(rather than in the separate sub-windows in Section~\ref{cpesepspec:sec}).
The OLT effectively grants transmission
windows to a given ONU to fit in all the traffic (in randomly
statistically multiplexed order) of the DSL CPEs attached to the
drop-point containing the ONU.

\begin{theorem}
When the aggregate upstream transmission bit rate of the $E$ CPEs at
an ONU is less than the PON upstream transmission bit rate, i.e.,
when $E R_d \leq R_p$, then the ONU can commence the continuous
transmission of the multiplexed CPE data at the earliest at
\begin{eqnarray}  \label{mum:eqn}
\mu^m &=& (E+1) g_p+ \tau + g_d + \max_{c}
      \left( 2 \delta_c + \frac{ G_c}{R_d} \right) \nonumber \\
     && \ \ \ \ \ \ \ \ \ \ \ \ \ \ \
   + \frac{ EM - \sum_{c = 1}^E G_c}{R_p}.
\end{eqnarray}
\end{theorem}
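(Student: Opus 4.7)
The plan is to derive the formula in two stages: first compute the earliest instant at which each CPE's tail data can arrive at the drop-point, and then enforce the continuity constraint on the multiplexed ONU transmission to pin down $\mu^m$.

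First I would trace the downstream grants. Per Section~\ref{onucpe_poll:sec}, in every cycle the OLT issues one gate for the ONU plus $E$ gates destined for the CPEs in sequence, so all gates clear the OLT within $(E+1)g_p$; they reach the ONU after PON propagation $\tau$ and are then forwarded over each private DSL in time $g_d$ and $\delta_c$. Consequently CPE $c$ cannot begin transmitting before $\sigma_c^m = (E+1)g_p + \tau + g_d + \delta_c$. Assuming this earliest start, CPE $c$ takes $G_c/R_d$ to put its granted bits on the line, and they propagate $\delta_c$ back to the drop-point, so the last bit arrives at $a_c = (E+1)g_p + \tau + g_d + 2\delta_c + G_c/R_d$, with the latest such arrival being $a_{\max} = (E+1)g_p + \tau + g_d + \max_c(2\delta_c + G_c/R_d)$.

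Second I would impose the continuity constraint on the ONU upstream window. The window has length $\sum_c G_c/R_p$ since the ONU transmits continuously at rate $R_p$, so beginning at $\mu^m$ it ends at $\mu^m + \sum_c G_c/R_p$. Because the multiplexed ordering of packets is not prescribed by the OLT, the schedule must accommodate the worst case in which the tail packet (up to $M$ bits) of \emph{each} CPE is deferred to the end of the window. Each such tail packet can only begin transmission after full receipt at the drop-point, and in the worst case none of them arrives before $a_{\max}$; the $E$ tail packets therefore occupy the final $EM/R_p$ of the window, forcing $\mu^m + \sum_c G_c/R_p \geq a_{\max} + EM/R_p$. Rearranging yields the claimed formula~(\ref{mum:eqn}). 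The hypothesis $E R_d \leq R_p$ is what makes this bound binding: if the aggregate DSL supply rate were faster than the PON drain rate, data would pile up at the drop-point faster than the ONU could clear it and a different (earlier) $\mu^m$ would be admissible.

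The hard part will be justifying the worst-case tail allocation of $EM/R_p$ without overcounting. One has to argue that no multiplexed schedule that interleaves a CPE's tail packet with data still being delivered by other CPEs can finish the window any earlier under the stated model. I would handle this either by an inductive argument analogous to the two-CPE base case used in the proof of Theorem~1, or directly by showing that at instant $a_{\max}$ the drop-point in the worst case still holds $EM$ un-transmittable bits (the $E$ tail packets), which any $R_p$-rate transmission needs $EM/R_p$ more seconds to clear.
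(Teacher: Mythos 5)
Your proposal follows essentially the same route as the paper's proof: you compute the latest complete-reception instant $(E+1)g_p+\tau+g_d+\max_c(2\delta_c+G_c/R_d)$, reserve the final $EM/R_p$ of the window for the $E$ per-CPE tail packets that cannot be forwarded until fully received, and back out $\mu^m$ by subtracting the window length $\sum_c G_c/R_p$. The paper states the tail-packet step as an achievability claim (under $ER_d\leq R_p$ at most one packet per CPE remains unforwarded at that instant) rather than as a worst-case necessity bound, but the decomposition and the resulting formula are identical, and the step you flag as needing more care is glossed over to the same degree in the paper.
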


\begin{proof}
The individual CPE upstream transmissions $c = 1, 2, \ldots, E$,
can at the earliest be completely received by the drop point by time
instants $\sigma_c + \delta_c + G_c / R_d$, whereby $\sigma_c$ is given
by Eqn.~(\ref{sigmac:eqn}).
The latest such instant of complete reception of the data from a CPE at
the drop point is
\begin{eqnarray}
\omega = (E+1) g_p+ \tau + g_d + \max_{c}
      \left( 2 \delta_c + \frac{ G_c}{R_d} \right).
\end{eqnarray}
If the aggregate transmission bit rate $E R_d$ of the $E$ CPEs does not
exceed the PON upstream transmission bit rate $R_p$,
the ONU can transmit all multiplexed CPE data upstream such that only
one data packet,
from at most each of the $E$ CPEs,
remains to be transmitted after $\omega$.
Thus, the ONU can complete the upstream transmission by
$\omega + E M / R_p$.
Since the ONU has to transmit a total of $\sum_{c = 1}^E G_c$ bits of CPE data,
the corresponding starting time instant of the ONU transmission must be
$\sum_{c = 1}^E G_c / R_p$ before $\omega + E M / R_p$,
resulting in the transmission start instant given by Eqn.~(\ref{mum:eqn}).
\end{proof}

With the ONU transmission starting at instant $\mu^m$,
the ONU transmission is completed at instant
$\mu^m + \sum_{c=1}^E G_c / R_p$.
All CPE data has to arrive to the drop-point at least
$E M/R_p$ before the ONU transmission
completion instant $\mu^m + \sum_c G_c / R_p$.
CPE $c$ data is completely received by the drop point
$G_c / R_d + \delta_c$ after the CPE transmission
starting instant $\sigma_c^m$.
Thus, CPE $c$ can start transmission at the latest at instant
\begin{eqnarray}  \label{sigcm:eqn}
\sigma_c^m = \mu^m + \frac{ \sum_{c=1}^E G_c - EM }{R_p} - \frac{G_c}{R_d} - \delta_c.
\end{eqnarray}

\section{Performance Evaluation}   \label{sec:experiment}
We conducted a wide set of simulations
to answer three questions of practical interest:
\begin{enumerate}
  \item When is flow control required to provide a specific bound on ONU
    buffer occupancy without loss at the ONU?
  \item When does PAUSE frame flow control fail to provide a specific bound
     on ONU buffer occupancy without loss at the ONU?
  \item What is the range of bounds on ONU buffer occupancy without loss
   at the ONU that can be achieved with Gated flow control?
\end{enumerate}
We used a PON/xDSL hybrid access network simulator that we developed
using the CSIM discrete event simulation library. We considered the
XGPON~\cite{XGPON} protocol for the PON segment and the
VDSL2~\cite{VDSL} protocol for the DSL segment as these two
technologies are being actively deployed in real hybrid access
networks. We set the XGPON upstream bit rate to $R_p = 2.488$~Gbps
and the guard time to 30~ns. The XGPON contained $O = 32$ ONUs, each
with $E = 8$ attached VDSL lines (for a total of 256 CPEs). The
upstream bit rate for each VDSL line was set to $R_d = 77$~Mbps to
achieve a realistic worst-case over-subscription rate of 8x. The OLT
to ONU one-way propagation delays $\tau$ were continuously
distributed between 2.5 $\mu$s (i.e., 500~m) and 100 $\mu$s (i.e.,
20~km). The ONU to CPE  propagation delays $\delta$ are considered
negligible. We set the maximum cycle length to $Z = 3$~ms. The CPEs
independently generated data packets according to a quad mode packet
size distribution with 60~\% 64~Byte packets, 4~\% 300~Byte packets,
11~\% 580~Byte packets, and 25~\% 1518~Byte packets.
Each simulation run for a given traffic load considered $10^8$ packets.

\begin{figure*}[t]
\centering
\begin{tabular}{cc}
\includegraphics[scale=0.7]{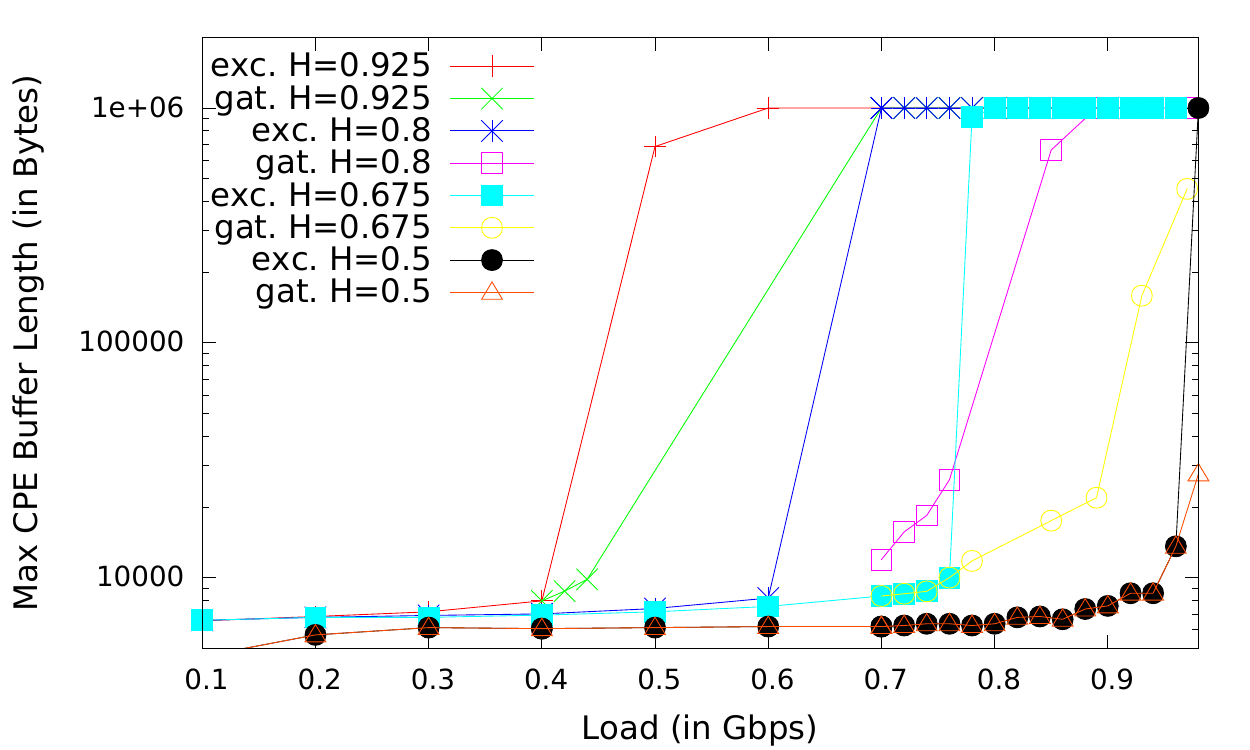} &
\includegraphics[scale=0.7]{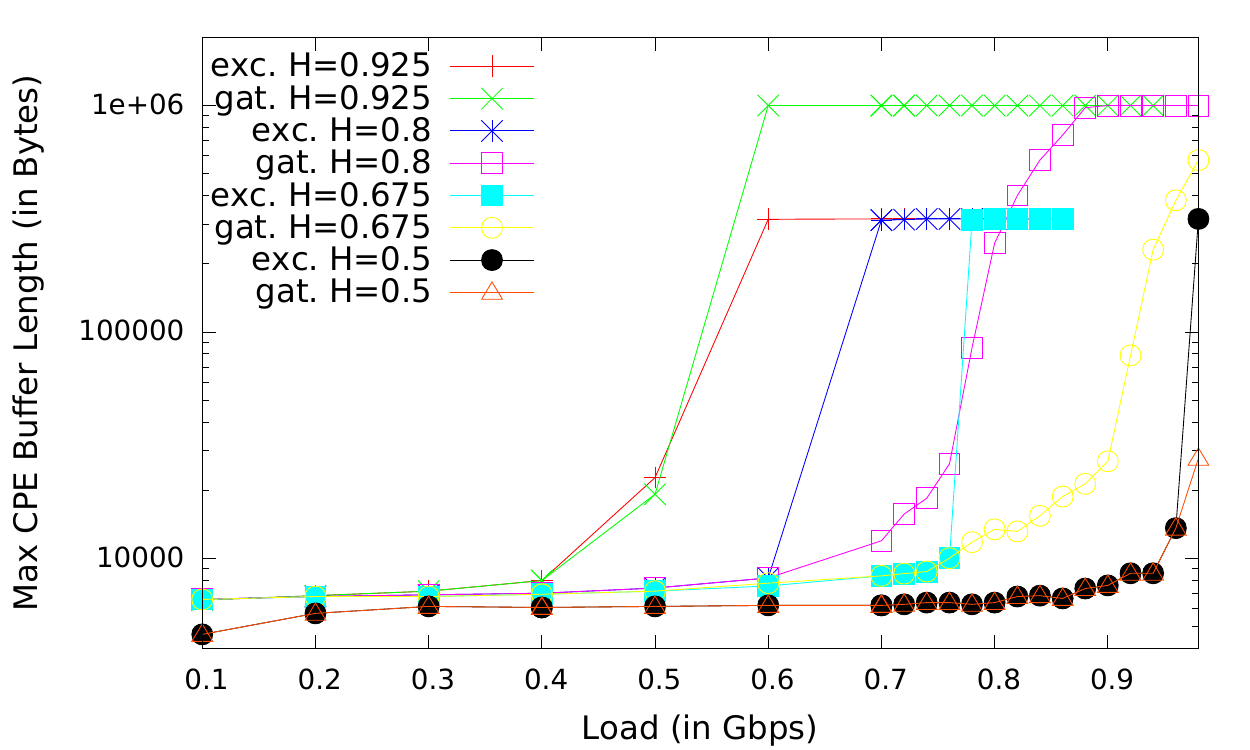} \\
\footnotesize{a) No flow control (max. CPE buff.)} &
\footnotesize{b) PAUSE frame flow control (max. CPE buff.)} \\

\includegraphics[scale=0.7]{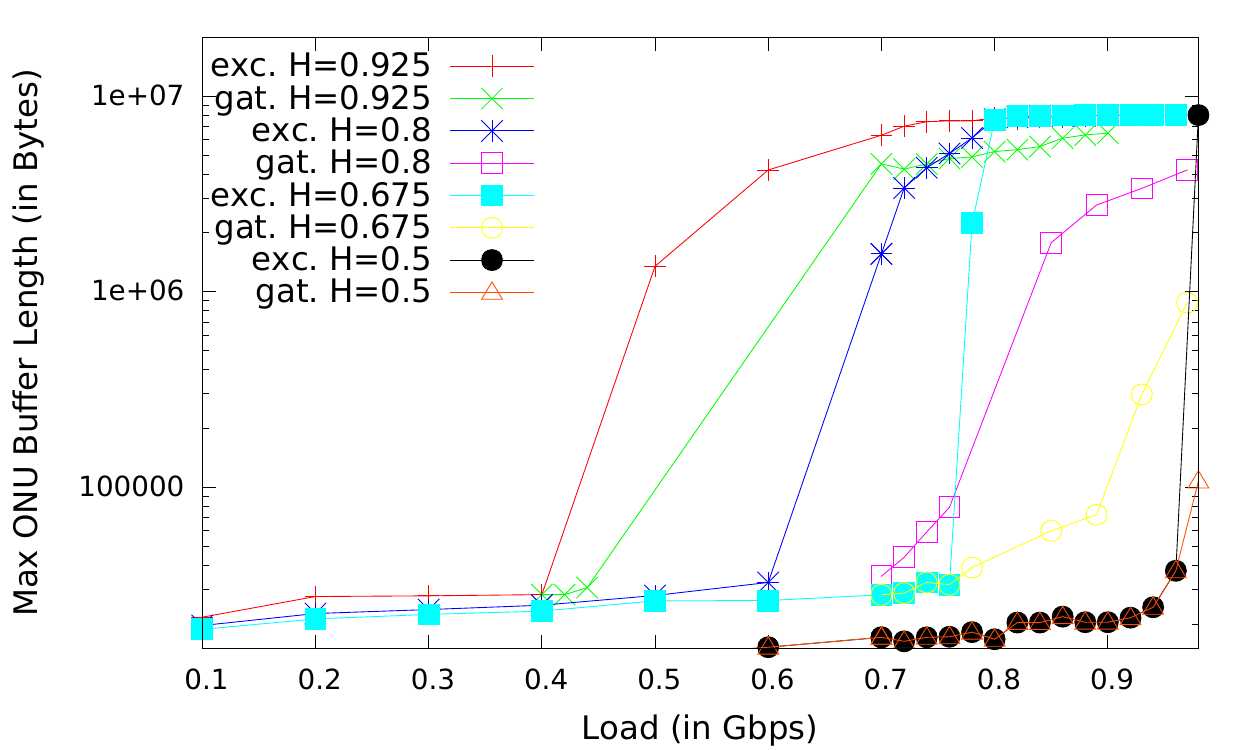} &
\includegraphics[scale=0.7]{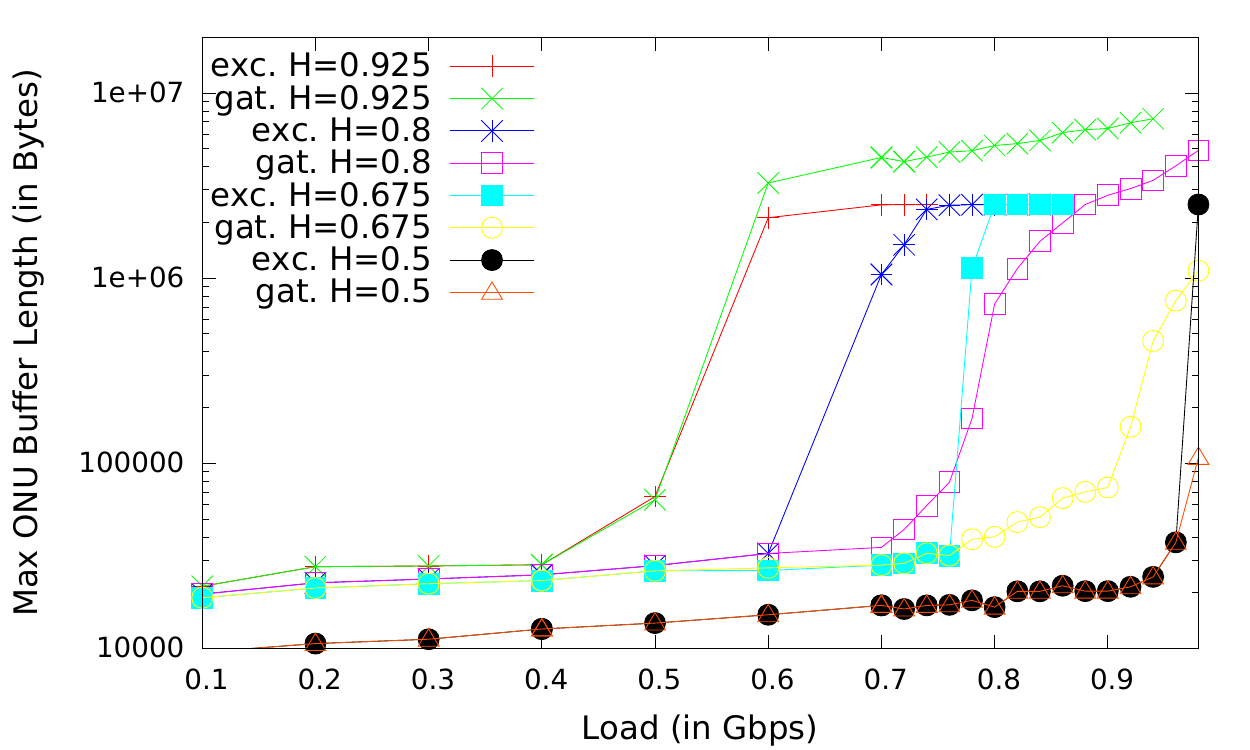} \\
\footnotesize{c) No flow control (max. ONU buff.)} &
\footnotesize{d) PAUSE frame flow control (max. ONU buff.)} \\

\includegraphics[scale=0.7]{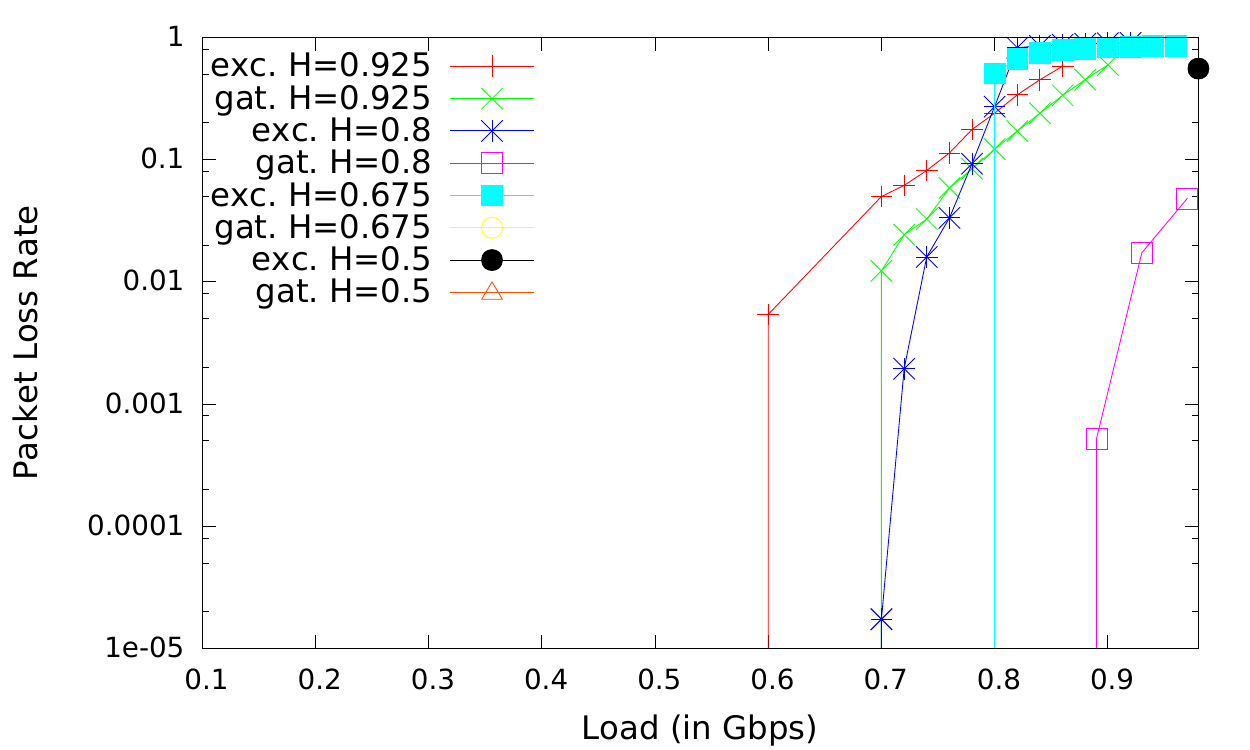} &
\includegraphics[scale=0.7]{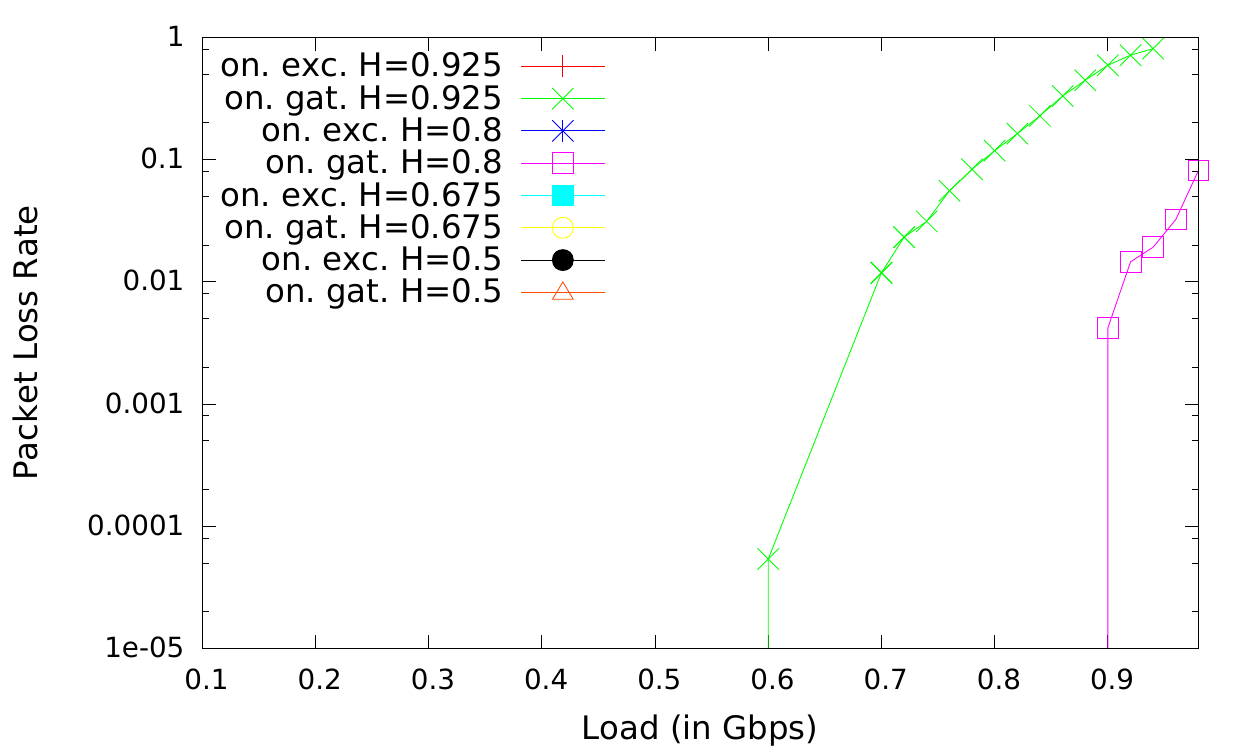} \\
\footnotesize{e) No flow control (pkt. loss rate)} &
\footnotesize{f) PAUSE frame flow control (pkt. loss rate)} \\
\end{tabular}
\caption{Comparison of no flow control vs. ONU polling PAUSE frame flow
control with CPE buffer capacity of 1~MB.}
\label{fig:noflow_vs_pause_100KB}
\end{figure*}
\subsection{No Flow Control}
To answer question 1 we forgo the use of any flow control, utilize
large CPE buffer capacities (1 MBytes), and monitor the
maximum buffer occupancy. The DBA algorithm, source traffic
burstiness, and presented traffic load are factors that will affect
buffer occupancy at the ONU. Therefore, we vary these factors. We
consider the (Online, Gated) and (Online, Excess) DBA algorithms that
have been shown to provide good performance in conventional
PONs~\cite{MR0712}, with a reporting approach akin to~\cite{sku2010dyn}
for the newly generated traffic.
Gated grant sizing assigns each ONU the full bandwidth
request~\cite{KMP0202,ZM0709}.
The employed (Online, Excess) grant sizing
approach assigns each ONU its request up to
the maximum ONU grant size of Limited grant sizing~\cite{KMP0202,ZM0709},
i.e., an $1/O$ share of the total PON upstream transmission capacity $Z R_p$
in a cycle,
plus a $1/O$ share of accumulated unused excess bandwidth (which was
also limited to $Z R_p$)~\cite{MerMcM13}; thus, the total maximum
ONU grant is $2 Z R_p / O$.
We vary the
burstiness of the traffic by using a self-similar traffic source in
which we vary the Hurst parameter from 0.5 (equivalent to a Poisson
traffic source) to 0.925 (equivalent to very bursty traffic).

Fig.~\ref{fig:noflow_vs_pause_100KB}a), c), and e)
contains plots of the maximum buffer occupancies and packet loss rate
versus presented traffic load without the use of flow
control.
The traffic load is represented as a fraction of the full XGPON upstream
transmission rate of 2.488Gbps.
We define the maximum CPE buffer occupancy as the largest (maximum)
of the maximum CPE buffer occupancies $B_{\max, c}$,
see Fig.~\ref{fig:hill} and Eqn.~(\ref{Bmax:eqn}), observed during a very long
simulation considering over $10^8$ packet transmissions.
The maximum ONU buffer occupancy is analogously defined as the largest
aggregate of the CPE buffer occupancies, see Eqn.~(\ref{Bt:eqn}).
Our primary observation from these plots is that the maximum buffer
occupancy increases modestly until a certain ``knee point''
load value and
then increases very sharply. The ``knee point'' load value
depends on both the
DBA algorithm and the burstiness of the source traffic.
If the buffer occupancy below the knee point load value meets
requirements, then flow control can be switched on
just when the knee point load value is reached.
As an example, when using the (Online, Excess) DBA algorithm, the
maximum ONU buffer occupancy value before the knee point
is 32~KB or less and the maximum CPE buffer
occupancy is 10~KB or less.  If 32~KB was the desired upper bound
on the maximum aggregate ONU buffer occupancy, then flow control need
only be activated once the presented load approached 0.94 for
non-bursty traffic ($H=0.5$) or 0.4 for highly bursty traffic
($H=0.925$). Not surprisingly, bursty traffic will require flow control
under wider load conditions than non-bursty traffic.

\subsection{ONU Polling PAUSE Frame Flow Control}
To answer question 2 we use PAUSE frame flow control with a
threshold of 35~\% buffer capacity to trigger the transmission of
PAUSE frames with a duration of 2~ms. A set of experiments, that we
leave out due to space constraints, were conducted to explore that
two-dimensional parameter space of buffer threshold and PAUSE
duration.  Those experiments indicated that (35\%, 2 ms) provided
the best performance.

Figure \ref{fig:noflow_vs_pause_100KB}b), d), and f)
contains plots of the maximum
buffer occupancies and packet loss rates versus presented traffic load
with PAUSE frame flow control.
We observe that the maximum buffer occupancy trends when using PAUSE
frame flow control are similar to when no flow control is used. A
notable exception is that for the (Online, Excess) DBA algorithm,
the maximum CPE buffer occupancy stays below approximately 300~KB
when PAUSE frame flow control is used,
compared to 1~MB (i.e., the full capacity) when no flow control is
used. For the (Online, Gated) DBA algorithm, the maximum CPE
buffer occupancy reaches the 1~MB buffer capacity for highly bursty traffic
($H = 0.8$ and 0.925), regardless of whether PAUSE frame
flow control is used. The unlimited grant sizes of the
(Online, Gated) DBA algorithm appear to undermine the efforts of flow
control.

From the packet loss rate plots in
Figure~\ref{fig:noflow_vs_pause_100KB}e) and f)
we observe that when using the
(Online, Excess) DBA algorithm, PAUSE frame flow control can
eliminate packet losses. On the other hand, for
the (Online, Gated) DBA algorithm with unlimited grant sizes,
PAUSE frame flow control is unable
to lower the packet loss rate for the bursty $H = 0.8$ and 0.925 traffic.

\begin{figure*}[t!]
\centering
\begin{tabular}{cc}
\includegraphics[scale=0.7]{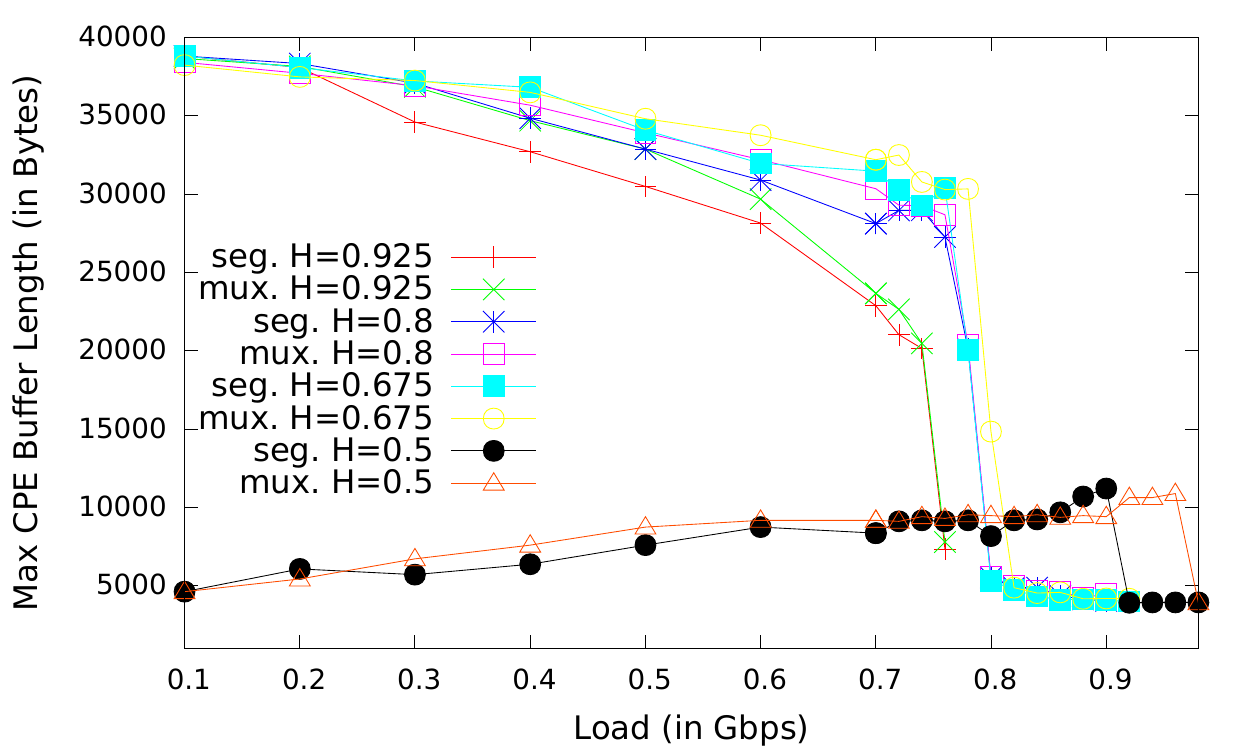} &
\includegraphics[scale=0.7]{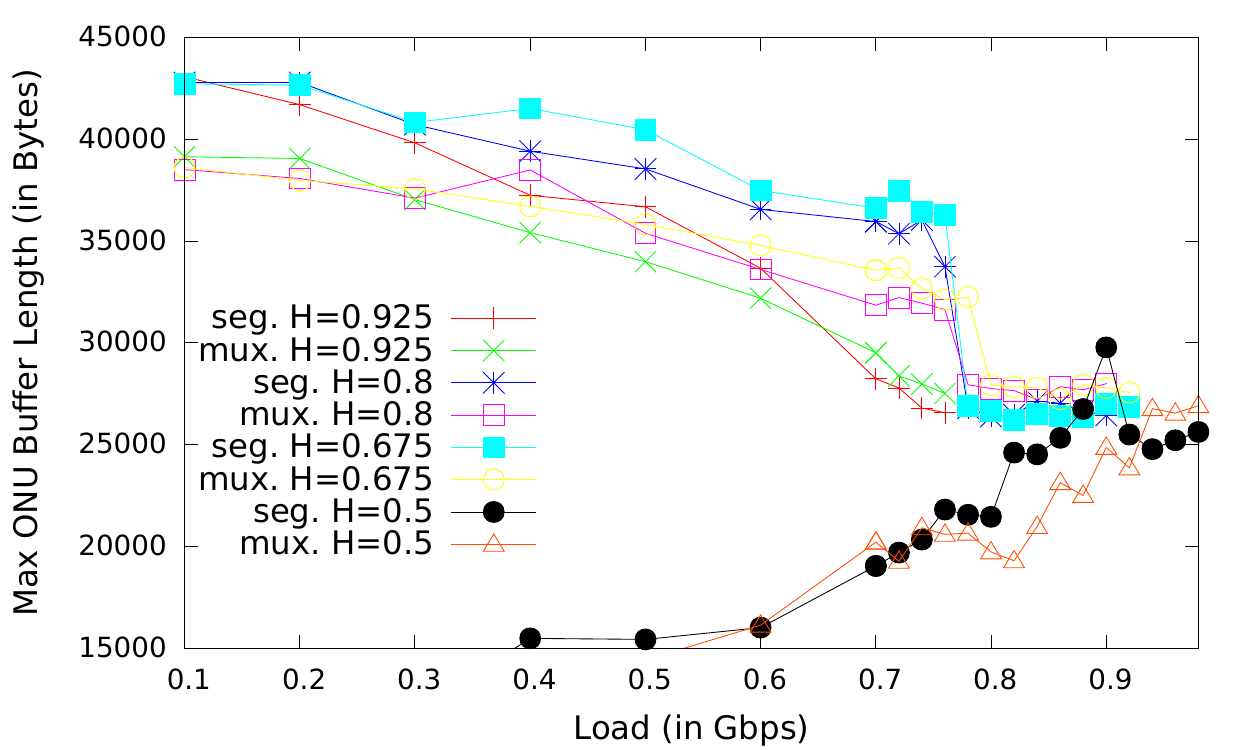} \\
\footnotesize{a) Max. CPE buffer occupancy} &
\footnotesize{b) Max. ONU buffer occupancy}
\end{tabular}
\caption{Maximum occupancies of CPE and ONU buffers for GATED Flow
Control approaches ONU:CPE:seg and ONU:CPE:mux with (Online, Excess)
dynamic bandwidth allocation (DBA) on PON for different levels of
traffic burstiness (i.e., different Hurst parameters $H$).}
\label{fig:gated_excess}
\end{figure*}
\subsection{GATED ONU:CPE Polling Flow Control}
To answer question 3 we present results for the two Gated ONU:CPE
polling flow control protocols introduced in
Section~\ref{onucpe_poll:sec}, namely segregated (ONU:CPE:seg) and
multiplexing (ONU:CPE:mux) polling flow control. We continue to
consider the (Online, Excess) sizing for the ONU grants. A given ONU
grant is distributed to the CPEs according to the equitable
iterative excess distribution method~\cite{AYDA1103,bai2006fair},
which fairly divides the ONU grant among the CPEs, allowing CPEs
with high traffic loads to utilize the unused fair shares of the low
traffic loads.
Figure~\ref{fig:gated_excess} contain plots of the maximum buffer occupancies
and average packet delays as a function of load.

\subsubsection{Maximum CPE and ONU Buffer Occupancies}
We observe from Figure~\ref{fig:gated_excess}
that for low loads of bursty traffic with Hurst parameters $H > 0.5$,
the maximum CPE and ONU buffer occupancies are approximately twice
the maximum ONU grant size of a Limited DBA grant
sizing at low traffic loads, i.e., approximately
$2 Z R_p / O$.
At low bursty traffic loads it is likely that only very few CPEs
(that are attached to only a few ONUs)
generate a traffic burst at a given time, while the other CPEs have no
traffic.
This permits the ONUs with attached CPEs with a traffic burst
through the considered Online
Excess DBA mechanism~\cite{AYDA1103,bai2006fair,MerMcM13}
to utilize the excess bandwidth allocation from the ONUs without
traffic bursts. The considered Online Excess DBA limits
the excess allocation from other ONUs to a given ONU to once the
maximum Limited DBA grant size.
Thus, if a single CPE at an ONU generates a traffic burst,
the CPE is allocated a grant of twice the maximum Limited DBA grant size,
resulting in correspondingly large maximum CPE and ONU buffer occupancies.
(The ONU buffer occupancies slightly above 40~kB are due to small
residual backlog from preceding cycles due to the
different DSL and PON framing mechanisms, see Section~\ref{cpesize:sec}.)

Interestingly, we observe from Figure~\ref{fig:gated_excess} that
the maximum CPE and ONU buffer occupancies in the bursty
($H > 0.5$) traffic scenarios decrease with increasing
traffic load.
As the traffic load increases, more and more CPEs
have backlogged (queued) traffic bursts.
When all ONUs have some CPEs with backlogged traffic, there is no more
excess allocation from ONUs with little or no traffic backlog to
ONUs with large traffic backlog.
Thus, the Online Excess DBA mechanism turns into the Online Limited DBA
mechanism and allocates to each ONU the maximum Limited DBA ONU grant size.
Thus, as the traffic load increases, the traffic amount
transmitted upstream on the PON bandwidth is more equally distributed among
the ONUs as more and more ONUs have CPEs with backlogged traffic bursts.
In turn, the grant allocation to a given ONU is more
equally divided among its attached CPEs as the
traffic load increases and more and more CPEs at an ONU have backlogged
traffic bursts.

For the Poisson traffic scenario ($H = 0.5$), we observe from
Fig.~\ref{fig:gated_excess} that the CPE and ONU buffer occupancies
continuously increase with increasing traffic load (except for a
drop in CPE buffer occupancy at very high loads). In contrast to
bursty traffic sources that generate bursts of several packets at a
time, Poisson traffic sources generate individual data packets.
These individually generated packets are uniformly distributed
(spread) among the CPEs, and correspondingly the ONUs. Thus, there
is essentially no excess allocation among ONUs at low load levels
and the maximum CPE and ONU buffer occupancies grow gradually with
increasing traffic load. (For high load levels there is some excess
allocation, which decreases at very high loads as all CPEs and ONUs
have backlogged traffic, resulting in the CPE buffer occupancy drop
at very high loads.)

In additional simulations, we observed that for the Online, Gated PON DBA,
which grants the ONUs the full bandwidth requests~\cite{ZM0709},
the maximum CPE and ONU buffer occupancies depend mainly on the burstiness
of the traffic: around 10~kBytes for Poisson traffic and
on the order of 10~MBytes  for bursty traffic with $H > 0.5$,
for the considered network scenario.
In contrast, for the Online, Limited PON DBA, which strictly
limits the grant allocation to an ONU to a prescribed limit $Z R_p/O$
(and does not permit re-allocations among ONUs which are
possibly in the Online, Excess PON DBA)~\cite{ZM0709},
we have observed that the maximum CPE and ONU buffer occupancies
are generally bounded by the maximum ONU grant size
$Z R_p/O$~\cite{MGMR1115}.
Thus, our extensive simulations have validated that Gated ONU:CPE polling
flow control effectively limits the maximum CPE and ONU buffer occupancies
through the employed grant sizing mechanisms.

\begin{figure*}[t]
\centering
\begin{tabular}{cc}
\includegraphics[scale=0.7]{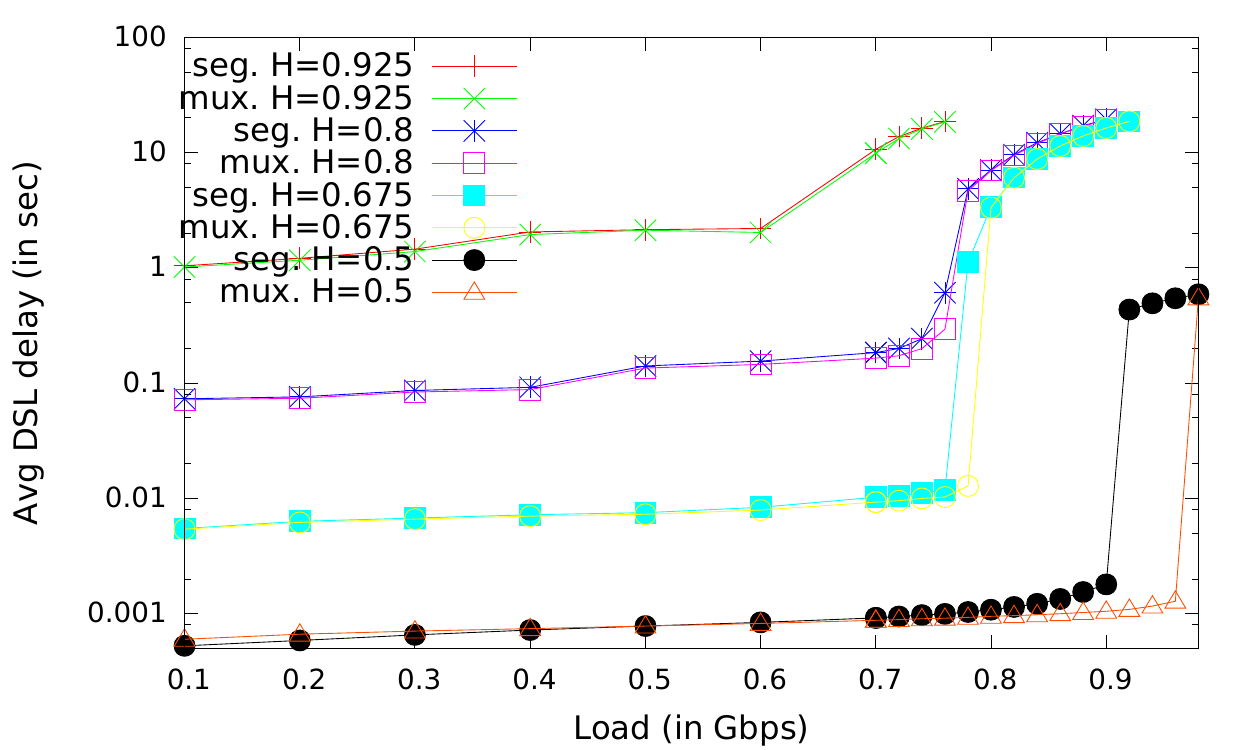} &
\includegraphics[scale=0.7]{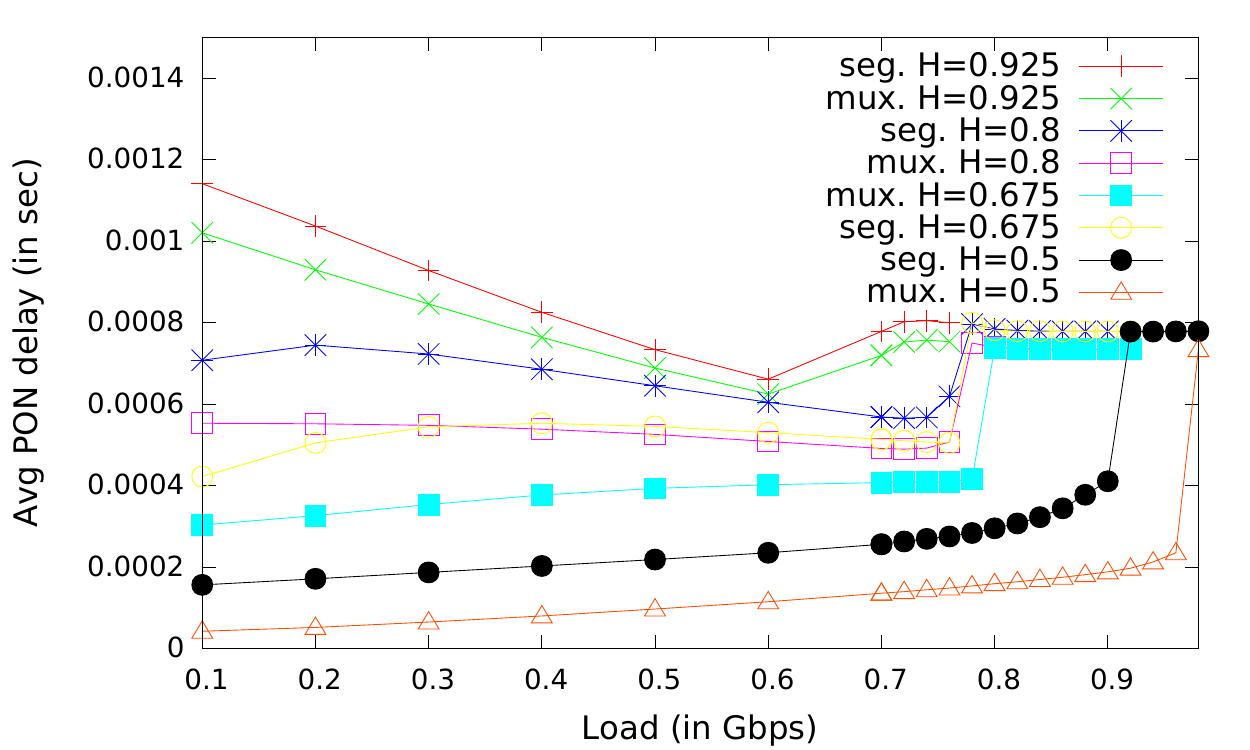} \\
\footnotesize{a) DSL delay} &
\footnotesize{b) PON delay}
\end{tabular}
\caption{Average packet delays on DSL and PON segments for
GATED Flow Control approaches ONU:CPE:seg and ONU:CPE:mux with
(Online, Excess) DBA on PON for different level of
traffic burstiness (i.e., Hurst parameter $H$). }
\label{fig:gated_excess:delay}
\end{figure*}
\subsubsection{DSL and PON Delay}
We observe from Figure~\ref{fig:gated_excess:delay} that the DSL
delay component from the instant of packet generation to the
complete packet reception at the drop point (ONU) increases first
slowly for low loads. Then, for moderately high loads above 0.6, we
begin to observe rapidly increasing DSL delays, first for the highly
bursty $H = 0.925$ traffic and then at higher loads above 0.75 for
the $H = 0.8$ and $H = 0.675$ traffic scenarios. The DSL delays for
these $H > 0.5$ scenarios shoot up to values above 18~s (i.e.,
outside the plotted range) as the traffic bursts overwhelm the
system resources. In contrast, for Poisson traffic, we observe
steadily increasing delays that remain below 1~s even for very high
traffic loads. We also observe that the ``mux'' approach, which
multiplexes upstream transmissions from different CPEs on the
upstream PON wavelength channel achieves lower delays than the
``seg'' approach with segregated CPE upstream transmissions on the
PON. The DSL delay reduction achieved with the multiplexing approach
is particularly pronounced for high Poisson traffic loads, where
the multiplexing approaches reduces the DSL delay by over 0.5~s
compared to the corresponding delay with the segregated approach.

The PON segment delay of a packet from the instant of
packet reception at the drop point (ONU) to the instant
of packet reception at the ONU depends on the
CPE buffer occupancies, as analyzed in Section~\ref{pondelay:sec}.
Essentially, for the segregated CPE transmission approach,
the average PON packet delay corresponds directly
(is proportional) to the
average of the maximum CPE buffer occupancies $B_{\max, c}$
across the individual polling cycles.

We observe from Fig.~\ref{fig:gated_excess:delay} initially (in the low load
region)
decreasing PON delay with increasing load
for the highly bursty $H = 0.925$ traffic,
The other traffic scenarios give initially slowly increasing
PON delays that rapidly increase for high loads in the 0.75--0.95 load range
and then level out.
For the very bursty $H = 0.925$ traffic, the individual
(average)
maximum CPE buffer occupancies $B_{\max,c}$ of payload data packets
(i.e., ignoring the drop point buffer occupancy of CPEs sending only
Report control packets) are initially very large
due to the traffic bursts at individual CPEs and associated ONUs,
which receive excess allocations from the other ONUs
(similar to the dynamics for low loads in Fig.~\ref{fig:gated_excess}).
These excess allocations diminish as CPEs at more and more ONUs get backlogged,
resulting in a decrease of the average maximum CPE buffer occupancies,
and correspondingly a decrease of the average PON packet delay.

For the other traffic scenarios with $H = 0.8$ and lower,
the burstiness is less pronounced, avoiding a decrease of the average
maximum CPE buffer occupancy for increasing loads in the low load region,
whereas the
largest (across a long simulation run) maximum CPE buffer occupancy
does exhibit a significant decrease, see Fig.~\ref{fig:gated_excess}.
For very high loads, the average PON packet delay,
which is proportional to the average maximum CPE buffer occupancy,
levels out around 0.8~ms.
This leveling out is analogous to the leveling out of the largest
maximum CPE buffer occupancy in Fig.~\ref{fig:gated_excess}).
We note that the average PON packet delay of roughly 0.8~ms
is substantially longer than the maximum PON packet delay
obtained with the delay analysis in Section~\ref{pondelay:sec}
for the maximum CPE buffer occupancy of roughly 5~kB for very high loads in
Fig.~\ref{fig:gated_excess}.
The analysis in Section~\ref{pondelay:sec} neglects the small residual
drop point buffering. However, the relatively few packets that
make up the residual buffering have to wait approximately the full
cycle length $Z = 3$~ms for upstream transmission in the next cycle;
thus, substantially increasing the mean PON packet delay.
Nevertheless, due to the flow control back-pressuring the data into the CPEs
until an ONU grant can accommodate the CPE data transmissions, the PON segment
delays are minuscule compared to the DSL segment delays.

We observe from Fig.~\ref{fig:gated_excess:delay} that
multiplexing CPE transmissions gives generally lower PON delays
than segregating CPE transmissions.
The delay analysis in Section~\ref{pondelay:sec} applies directly
to segregated CPE transmissions in that the CPE buffer
in the drop point is filled
at the rate $R_d$ of a single DSL line. The CPE buffer is filled
until the full optical transmission rate $R_p > R_d$ can be
sustained for the transmission of all $E$ CPE data sets over the PON.
When multiplexing CPE transmissions, multiple DSL lines supply
data to the drop point. Thus, the PON transmission can commence earlier,
resulting in shorter queueing delays for the first packets that arrived from
the CPEs to the drop point.

\section{Conclusion}  \label{sec:conclusion}
We have examined the buffering in the drop-point device connecting
the relatively low-transmission rate xDSL segment to the relatively
high-transmission bit rate PON segment in a hybrid PON/xDSL access
network. We found that the drop-point device experiences very high
buffer occupancies on the order of Mega Bytes or larger when no flow
control or when flow control with the standard Ethernet PAUSE frame
are employed. In an effort to reduce the buffer occupancies in the
drop points and thus to reduce the energy consumption of the drop
point devices, which are typically reverse powered from subscribers,
we introduced Gated ONU:CPE polling flow control protocols. We
specified the timing (scheduling) of these Gated ONU:CPE polling
flow control protocols for two types of upstream transmission:
segregated CPE sub-windows or
 multiplexed CPE transmissions within an ONU
upstream transmission window. Through extensive simulations for a
wide range of levels of traffic burstiness, we verified that the
Gated ONU:CPE polling protocols effectively limit the drop-point
buffering in individual CPE buffers or an aggregated ONU buffer. The
maximum CPE and ONU buffer occupancies correspond approximately to
the grant size limits of the polling-based medium access executed at
the OLT. Through adjusting the ONU and CPE grant sizes in the
proposed Gated ONU:CPE polling flow control protocols, the OLT can
effectively control the buffering in the drop-point devices.

One important direction for future research on
hybrid access networks is to extend the hybrid access
network evaluation to the local wired and wireless networks that
interconnect with the access network at the CPE.
Excessive buffering in the CPEs could be mitigated by further back-pressuring
the data transmissions to the gateways or host whose applications
generate large traffic flows.
Another important direction for future research is to examine
control mechanisms through software defined networking in hybrid
access networks~\cite{ker2014sof}.

\section*{Acknowledgment}
This material is based upon work supported by the National Science
Foundation under Grant No. CNS-1059430 and by a gift from Huawei
Technologies.

\vspace{\baselineskip}

\begin{figure*}[t]
\centering
\begin{tabular}{ccc}
\includegraphics[scale=0.65]{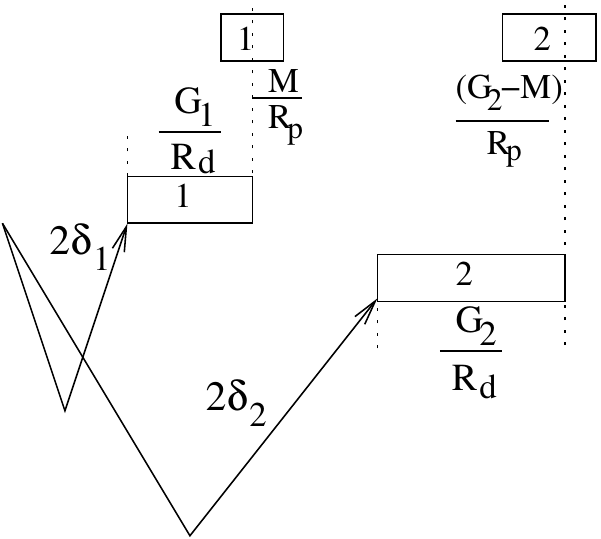}
&  \includegraphics[scale=0.65]{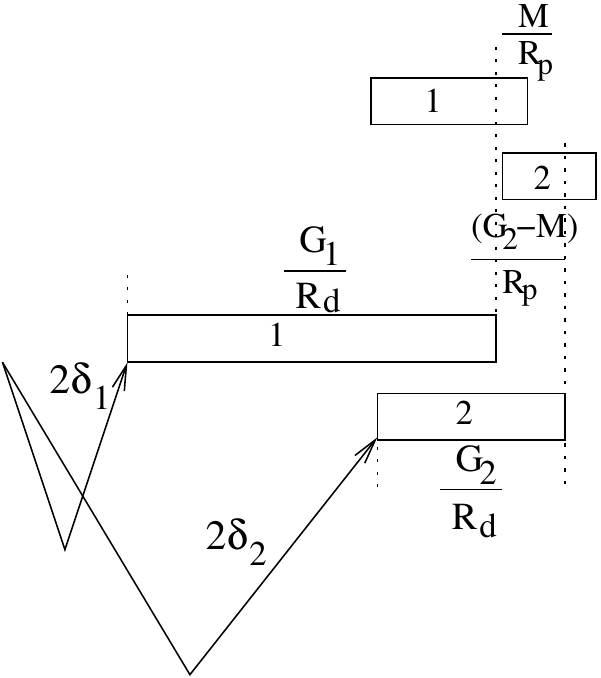} &
\includegraphics[scale=0.65]{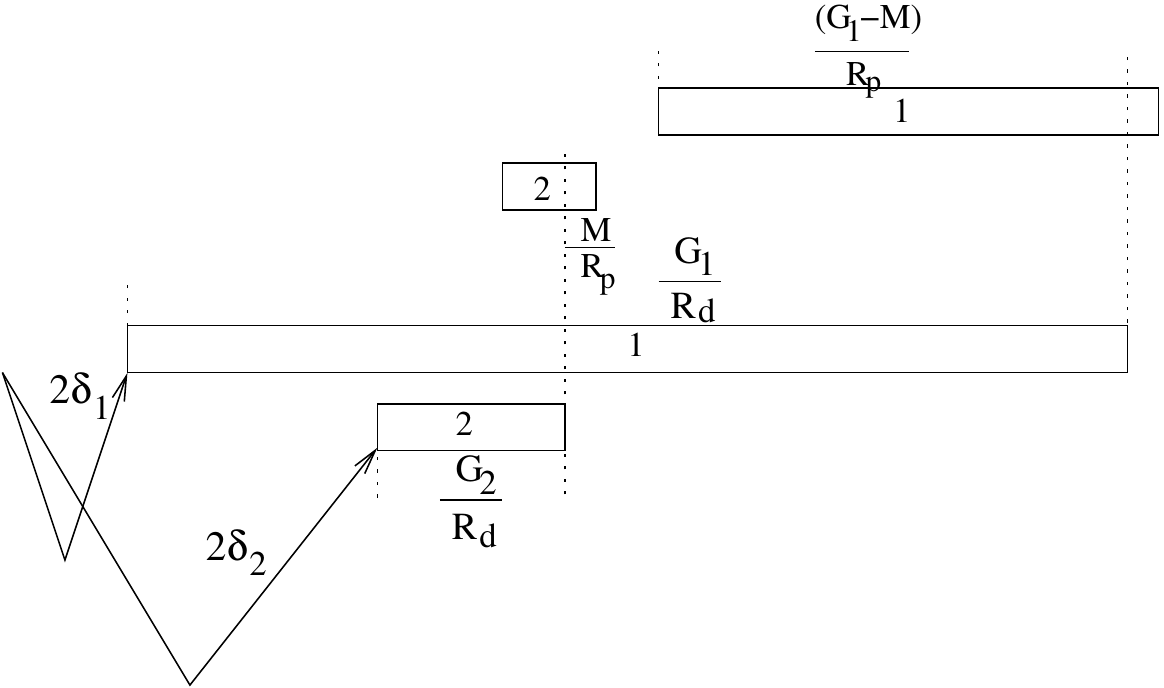} \\
\scriptsize{(a) $G_1 < G_1^{\mathrm{th1}}$}  &
   \scriptsize{(b) $G_1^{\mathrm{th1}} < G_1 < G_1^{\mathrm{th2}}$}  &
      \scriptsize{(c) $G_1 > G_1^{\mathrm{th2}}$}
\end{tabular}
\caption{Illustration of cases for analysis of minimum completion time
for two CPEs (CPE 1 and CPE 2) with segregate sub-windows in the PON grant.
This illustration shows the round-trip propagation delays
$2\delta_1,\ 2\delta_2$ on the DSL networks as well as the
DSL upstream transmission delays $G_1/R_d$ and $G_2/R_d$.
The PON upstream transmission delays $(G_1 - M)/R_p$
and  $(G_2 - M)/R_p$ can be masked by the DSL upstream transmissions
and influence when the PON upstream transmissions can commence.
The PON upstream transmission delays $M/R_p$ occur
after the DSL upstream transmission is complete.
}
\label{Tc_3cases:fig}
\end{figure*}
\section*{Appendix: Analysis of CPE Transmission Ordering for Two CPEs}
We assume for the following analysis without loss of
generality that CPE 1 has a smaller propagation delay to
the drop-point device than CPE 2 (i.e., $\delta_1 \leq \delta_2$).
We analyze the minimum delay $T$ for complete reception of both
upstream transmissions at the OLT.
There are three main cases for evaluating $T$,
as illustrated in Fig.~\ref{Tc_3cases:fig}:
\paragraph{Case small $G_1$, see Fig.~\ref{Tc_3cases:fig}(a)}
There is a gap between the CPE partitions on the PON since
$G_1$ is too small to mask the time until $G_2$ is ready for PON
upstream transmission.
(In this small $G_1$ case, the transmission
of CPE~1 could be delayed so as to avoid the occurrence of a gap,
and reduce the time that the ONU buffer holds the CPE~1 data.)
\paragraph{Case medium $G_1$, see Fig.~\ref{Tc_3cases:fig}(b)}
The partitions of CPE~1 and CPE~2 are transmitted back-to-back on the PON.
\paragraph{Case large $G_1$, see Fig.~\ref{Tc_3cases:fig}(c)}
$G_1$ is so large that the PON upstream transmission of $G_2$
is completed before $G_1$ is ready for PON upstream transmission.

We proceed to analyze the transmission order of the CPE transmission
windows on the PON and identify the minimum times for complete
reception of both CPE data transmissions at the OLT.
We denote with 12 the transmission order CPE~1 followed by
CPE~2, and denote 21 for the reverse transmission order.
To reduce clutter in this scheduling analysis, we re-define the
time periods $\beta$ and $\mu$ from Section~\ref{basicindcpe:sec} with
reference to the end of the downstream gate transmission by the ONU.

In order to identify the threshold $G_1^{\mathrm{th1}}$
that distinguishes the small and
medium $G_1$ cases we
initially consider the transmissions of CPE~1 and CPE~2 as completely
independent, i.e., we initially only consider one of these CPE
transmissions at a time.
From Fig.~\ref{Tc_3cases:fig}(a), we note that
the time period from the ending instant of the gate message transmissions
by the ONU to the time instant that
the ONU transmission of CPE~1 data is completed as
\begin{eqnarray}  \label{beta1:eqn}
\beta_1 = 2 \delta_1 + \frac{G_1}{R_d}
           + \frac{M}{R_p}.
\end{eqnarray}
Similarly, we express the time period until the time instant of
the beginning of the CPE~2 data transmission on the PON as
\begin{eqnarray}  \label{alpha2:eqn}
\mu_2 = 2 \delta_2 + \frac{G_2}{R_d} - \frac{G_2- M}{R_p}.
\end{eqnarray}
The transmission of CPE~1 data by the ONU is completed before
the ONU transmission of CPE~2 data can commence if
$\mu_2 > \beta_1$, i.e., if
\begin{eqnarray} \label{Gth1:eqn}
G_1 < G_2 \left( 1 - \frac{R_d}{R_p} \right)
      + 2 R_d (\delta_2 - \delta_1 ) =: G_1^{\mathrm{ th1}}.
\end{eqnarray}
 Thus, for $G_1 < G_1^{\mathrm th1}$, the transmission of CPE~1 data before
CPE~2 data does \textit{not} delay the commencement of CPE~2 data
transmission. Hence, the transmission order 12 achieves the
minimum cycle (completion) time
\begin{eqnarray}
T = 3 g_p + g_d + 2 \tau + 2 \delta_2 + \frac{G_2}{R_d}
           + \frac{M}{R_p}.
\end{eqnarray}

Next, we identify the threshold $G_1^{\mathrm{th2}}$ that distinguishes
the medium and large $G_1$ cases.
We note from Fig.~\ref{Tc_3cases:fig}(c) that
the ONU transmission of CPE~2 data is completed by
\begin{eqnarray}
\beta_2 = 2 \delta_2 + \frac{G_2}{R_d} + \frac{M}{R_p}.
\end{eqnarray}
The ONU transmission of CPE~1 data can commence at the
earliest at time
\begin{eqnarray}
\mu_1 = 2 \delta_1 + \frac{G_1}{R_d} - \frac{G_1 - M}{R_p}.
\end{eqnarray}
For $\mu_1 > \beta_2$, or equivalently, for
\begin{eqnarray}
G_1 > G_2 + 2 \frac{\delta_2 - \delta_1}{\frac{1}{R_d}
          - \frac{1}{R_p}} =: G_1^{\mathrm{th2} }.
\end{eqnarray}
 the ONU transmission of CPE~1 data is completed
before the ONU transmission of CPE~2 data can commence
That is, the CPE~2 data transmission does \textit{not} delay
the CPE~1 data transmission.
Thus, the 21 transmission order gives the minimum
completion time
\begin{eqnarray}
T = 3 g_p + g_d + 2 \tau
      + 2 \delta_1 + \frac{G_1}{R_d} + \frac{M}{R_p}.
\end{eqnarray}
Note also that $G_1^{\mathrm{th1}} \leq G_1^{\mathrm{th2}} \
          \forall \delta_2 \geq \delta_1,\ R_p > R_d$.

We now turn to the medium $G_1$ range illustrated in
Fig.~\ref{Tc_3cases:fig}(b).
We note from the illustration in Fig.~\ref{Tc_3cases:fig}(b) that the completion
time for the 12 transmission order is
\begin{eqnarray}   \label{TcsmallG1:eqn}
T^{12} = 3 g_p + g_d + 2 \tau + 2 \delta_1 + \frac{G_1}{R_d}
           + \frac{M + G_2}{R_p}.
\end{eqnarray}
We similarly obtain the completion time $T_c^{21}$ for the
21 transmission order and note that
\begin{eqnarray}
T_c^{12} &\leq& T_c^{21} \\
\Leftrightarrow 2 \delta_1 + \frac{G_1}{R_d}  + \frac{ G_2}{R_p}
    &\leq & 2 \delta_2 + \frac{G_2}{R_d}  + \frac{ G_1}{R_p}
 \label{detcrit:eqn}\\
\Leftrightarrow  G_1 &\leq& G_1^{\mathrm{th2} }.
\end{eqnarray}
Thus, the transmission order 12 gives the minimum
$T$ if $G_1 \leq G_1^{\mathrm{th2}}$.

In summary, the minimum time period $T$ from the instant
of commencing the transmission of the gate messages from the OLT to
the complete reception of both CPE data transmissions at the OLT is
obtained by the transmission order CPE~1 data followed by CPE~2 data
on the PON for $G_1 \leq G_1^{\mathrm{th2}}$.
For $G_1 \geq G_1^{\mathrm{th2}}$, the reverse transmission order of
CPE~2 data followed by CPE~1 data on the PON minimizes $T$.

\bibliographystyle{IEEEtran}

\end{document}